\documentclass[aps,prl,twocolumn,superscriptaddress, showpacs,10pt]{revtex4}



\usepackage{times}
\usepackage{color,graphicx}
\usepackage{subfig,array}
\usepackage[hyperindex,breaklinks,colorlinks=true]{hyperref}
\usepackage{amsthm,amssymb,amsmath,amsfonts}
\usepackage{cleveref}
\usepackage{amscd}
\usepackage{lineno}


\newcommand\ket[1]{\ensuremath{|#1\rangle}}
\newcommand\bra[1]{\ensuremath{\langle#1|}}

\newcommand\tr{\mathop{\rm tr}\nolimits}

\def\a{\mathbf{a}}

\newcommand{\appsection}[1]{\let\oldthesection\thesection
  \renewcommand{\thesection}{Appendix \oldthesection}
  \section{#1}\let\thesection\oldthesection}

\newtheorem{lemma}{Lemma}
\newtheorem{theorem}{Theorem}

\def\Dbar{\leavevmode\lower.6ex\hbox to 0pt
{\hskip-.23ex\accent"16\hss}D}





\newcommand{\nc}{\newcommand}

\nc{\cA}{{\cal A}} \nc{\cB}{{\cal B}} \nc{\cC}{{\cal C}}
\nc{\cD}{{\cal D}} \nc{\cE}{{\cal E}} \nc{\cF}{{\cal F}}
\nc{\cG}{{\cal G}} \nc{\cH}{{\cal H}} \nc{\cI}{{\cal I}}
\nc{\cJ}{{\cal J}} \nc{\cK}{{\cal K}} \nc{\cL}{{\cal L}}
\nc{\cM}{{\cal M}} \nc{\cN}{{\cal N}} \nc{\cO}{{\cal O}}
\nc{\cP}{{\cal P}} \nc{\cQ}{{\cal Q}} \nc{\cR}{{\cal R}}
\nc{\cS}{{\cal S}} \nc{\cT}{{\cal T}} \nc{\cU}{{\cal U}}
\nc{\cV}{{\cal V}} \nc{\cW}{{\cal W}} \nc{\cX}{{\cal X}}
\nc{\cZ}{{\cal Z}}

\def\a{\alpha}
\def\b{\beta}

\def\c{\chi}


\begin{document}


\title{Symmetric Extension of Two-Qubit States}

\author{Jianxin Chen}%
\affiliation{Department of Mathematics \& Statistics, University of
  Guelph, Guelph, Ontario, Canada}%
\affiliation{Institute for Quantum Computing, University of Waterloo,
  Waterloo, Ontario, Canada}%
\author{Zhengfeng Ji}%
\affiliation{Institute for Quantum Computing, University of Waterloo,
  Waterloo, Ontario, Canada}%
\affiliation{State Key Laboratory of Computer Science, Institute of
  Software, Chinese Academy of Sciences, Beijing, China}
\author{David Kribs}%
\affiliation{Department of Mathematics \& Statistics, University of
  Guelph, Guelph, Ontario, Canada}%
\affiliation{Institute for Quantum Computing, University of Waterloo,
  Waterloo, Ontario, Canada}%
\author{Norbert L\"{u}tkenhaus}%
\affiliation{Department of Physics \& Astronomy, University of Waterloo, Waterloo, Ontario, Canada}
\affiliation{Institute for Quantum Computing, University of Waterloo,
  Waterloo, Ontario, Canada}%
\author{Bei Zeng}%
\affiliation{Department of Mathematics \& Statistics, University of
  Guelph, Guelph, Ontario, Canada}%
\affiliation{Institute for Quantum Computing, University of Waterloo,
  Waterloo, Ontario, Canada}%
\affiliation{Department of Physics \& Astronomy, University of Waterloo, Waterloo, Ontario, Canada}

\begin{abstract}
 A bipartite state $\rho_{AB}$ is symmetric extendible if there
  exists a tripartite state $\rho_{ABB'}$ whose $AB$ and $AB'$
  marginal states are both identical to $\rho_{AB}$. Symmetric
  extendibility of bipartite states is of vital importance in quantum
  information because of its central role in 
  separability tests, one-way distillation of EPR pairs, one-way
  distillation of secure keys, quantum marginal problems, and
  anti-degradable quantum channels. We establish a simple analytic
  characterization for symmetric extendibility of any two-qubit quantum state
  $\rho_{AB}$; specifically, $\tr(\rho_B^2) \geq \tr(\rho_{AB}^2) - 4
  \sqrt{\det{\rho_{AB}}}$. Given the intimate relationship between the
  symmetric extension problem and the quantum marginal problem, our
  result also provides the first analytic necessary and sufficient
  condition for the quantum marginal problem with overlapping
  marginals.
\end{abstract}

\date{\today}

\pacs{03.65.Ud, 03.67.Dd, 03.67.Mn}

\maketitle
The notion of {\em symmetric extendibility} for a bipartite quantum state $\rho_{AB}$ was introduced in~\cite{doherty02a} as a test for entanglement. A bipartite density operator $\rho_{AB}$ is symmetric extendible if there exists a tripartite state $\rho_{ABB'}$ such that $\tr_{B'}(\rho_{ABB'})=\tr_B(\rho_{ABB'})$. A state $\rho_{AB}$ without symmetric extension is evidently entangled, and to decide such an extendibility for $\rho_{AB}$ can be formulated in terms of semi-definite programming (SDP)~\cite{VB96}. This then leads to effective numerical tests and bounds~\cite{DPS04,DPS05,NOP09,BC12} that allow for entanglement detection for some well-known positive-partial-transpose (PPT) states~\cite{Per96,HHH96,HHH01,Hor97,HL00}. 

States with symmetric extension also have a clear operational meaning for quantum information processing~\cite{BDC03}. One simple idea is that if a bipartite state $\rho_{AB}$ is symmetric extendible, then one cannot distill any entanglement from $\rho_{AB}$ by protocols only involving local operations and one-way classical communication (from $A$ to $B$)~\cite{NL09}, because of entanglement monogamy~\cite{BDE+98}. Furthermore, using the Choi-Jamiolkowski isomorphism, symmetric extendibility of bipartite states also provides a test for anti-degradable quantum channels~\cite{Myhr11}, and one-way quantum capacity of quantum channels~\cite{NL09}. 

A similar idea applies to the protocols for quantum key distribution (QKD), which aim to establish a shared secret key between two parties (for a review, see~\cite{scarani09a}). The corresponding QKD protocols can be viewed as having two phases: in a first phase, the two parties establish joint classical correlations by performing measurements on an untrusted  bipartite quantum state, while in a second phase a secret key is being distilled from these correlations by a public discussion protocol (via authenticated classical channels) which typically involves classical error correction and privacy amplification~\cite{cachin97a, gottesman03a, chau02a,kraus05a}. If the underlying bipartite state $\rho_{AB}$ is symmetric extendible, then no secret key can be distilled by a process involving only one-way communication. Therefore, the foremost task of the public discussion protocol is to break this symmetric extendibility by some bi-directional post-selection process. Failure to find such a protocol means that no secret key can be established~\cite{MRDL09,ML09, Myhr11}.

From each of these perspectives then, we draw motivation for considering the symmetric extension problem, which asks for a characterization of all bipartite quantum states that possess symmetric extensions. Although the SDP formulation does provide an effective numerical tool, one always hopes for analytical results to provide a complete picture.

For the simplest case in which $\rho_{AB}$ is a two-qubit state, it is conjectured in~\cite{ML09} that the set $\rho_{AB}$ is symmetric extendible if and only if the spectra condition $\tr(\rho_B^2)\geq \tr(\rho_{AB}^2)-4\sqrt{\det{\rho_{AB}}}$ is satisfied. This elegant inequality is arrived at by studying several examples, both analytically and numerically; for example the Bell diagonal states, and the $ZZ$-invariant states. Unfortunately, \cite{ML09} fails to prove in general either the necessity or the sufficiency of the conjecture, an unusual situation as typically one of the directions would be easy to establish. This hints at an intrinsic hardness to the problem, whose solution may require new physical insight.

It has been observed that the symmetric extension problem is a special case of the quantum marginal problem~\cite{Myhr11}, which asks for the conditions under which some set of density matrices $\{\rho_{A_i}\}$ for the subsets $A_i\subset\{1,2,\ldots,n\}$ are reduced density matrices of some state $\rho$ of the
whole $n$-particle system~\cite{Kly06}. The related problem in fermionic (bosonic) systems is the so-called $N$-representability problem, which inherits a long history in quantum chemistry~\cite{Col63,Erd72}.

Succinct necessary and sufficient conditions are obtained for the quantum marginal problem and the
$N$-representability problem for non-overlapping marginals~\cite{Kly06,AK08}. However, the overlapping version, which includes our symmetric extension problem as a special case, turns out to be much more difficult~\cite{Col63}. It was shown that the overlapping marginal problem belongs to the complexity class of QMA-complete, even for the relatively simple case where the marginals $\{\rho_{A_i}\}$ are two-particle density matrices~\cite{Liu06,LCV07,WMN10}. Nevertheless, the solution to small systems would provide insight on developing approximation/numerical methods for larger systems, though on the analytical side only a handful partial results are known~\cite{Smi65,CLL13}.

In this work, we prove the conjecture that a two-qubit state $\rho_{AB}$ is symmetric extendible
if and only if $\tr(\rho_B^2)\geq \tr(\rho_{AB}^2)-4\sqrt{\det{\rho_{AB}}}$. Our main insight for obtaining this result relies largely on the physical pictures from the study of the quantum marginal problem. Besides providing a better understanding for various quantum information protocols related to symmetric extension, our result also gives the first analytic necessary and sufficient condition for the quantum marginal problem with overlapping marginals.


\textit{Symmetric extension}-- For
any two-qubit state $\rho_{AB}$, denote its symmetric extension
by $\rho_{ABB'}$ (may be non-unique), hence $\rho_{AB}=\rho_{AB'}$. Consider the
following set
\begin{equation}
\label{eq:pure}
\mathcal{A}=\{\rho_{AB}: \vec{\lambda}(\rho_{AB})=\vec{\lambda}(\rho_B)\},
\end{equation}
where $\vec{\lambda}(\rho)$ denotes the nonzero eigenvalues of $\rho$
in decreasing order. It is shown in~\cite{ML09} that
$\mathcal{A}$ fully characterizes the set of two-qubit states
which admit pure symmetric extension
$\rho_{ABB'}=\ket{\psi_{ABB'}}\bra{\psi_{ABB'}}$
for some pure state $\ket{\psi_{ABB'}}$.
This follows from the Schmidt
decomposition of $\ket{\psi_{ABB'}}$, which gives the same nonzero spectra
for $\rho_{AB}$ and $\rho_{B}$.

The convex hull of $\mathcal{A}$ is given by
\begin{eqnarray*}
\label{eq:hull}
\mathcal{B}&=&\{\rho_{AB}: \rho_{AB}=\sum\limits_{j}p_j\rho_{AB}^j;\nonumber\\
&& 0\leq p_j\leq 1; \sum\limits_j p_j=1; \rho_{AB}^j\in \mathcal{A}\},
\end{eqnarray*}
which completely characterizes the set of two-qubit states
that admit symmetric extension.

It is conjectured in~\cite{ML09} that the set $\mathcal{B}$ may be equal
to another analytically tractable set $\mathcal{C}$ given by
\begin{eqnarray}
\label{eq:conj}
\mathcal{C}=\big\{\rho_{AB}: \tr(\rho_B^2)\geq \tr(\rho_{AB}^2)-4\sqrt{\det{\rho_{AB}}}\big\}.
\end{eqnarray}

Our main result is to show that the conjecture $\mathcal{B}=\mathcal{C}$ is indeed valid; that is,
\begin{theorem}
\label{th:main}
A two qubit state $\rho_{AB}$ admits a symmetric extension if and only if $\tr(\rho_B^2)\geq \tr(\rho_{AB}^2)-4\sqrt{\det{\rho_{AB}}}$.
\end{theorem}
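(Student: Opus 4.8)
The plan is to attack the two directions separately, using the structural description $\mathcal{B} = \mathrm{conv}(\mathcal{A})$ as the bridge between the operational definition and the spectral inequality defining $\mathcal{C}$.

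\emph{Necessity} ($\mathcal{B} \subseteq \mathcal{C}$). First I would show that every $\rho_{AB} \in \mathcal{A}$ lies in $\mathcal{C}$, and then argue that $\mathcal{C}$ is convex so that the containment passes to the convex hull. For $\rho_{AB} \in \mathcal{A}$ the nonzero spectra of $\rho_{AB}$ and $\rho_B$ coincide, which forces $\rho_{AB}$ to have rank at most $2$ (since a two-qubit $\rho_B$ has rank at most $2$); hence $\det \rho_{AB} = 0$ and the inequality reduces to $\tr(\rho_B^2) \geq \tr(\rho_{AB}^2)$, which is an equality on $\mathcal{A}$. The genuinely nontrivial part here is the convexity of $\mathcal{C}$: the function $\rho_{AB} \mapsto \tr(\rho_{AB}^2) - 4\sqrt{\det \rho_{AB}} - \tr(\rho_B^2)$ is not obviously concave, so I expect to rewrite the boundary condition in terms of the Bloch/correlation parametrization of a two-qubit state and check convexity of the sublevel set directly, possibly by exhibiting $\mathcal{C}$ as an intersection of simpler convex regions or by a Hessian computation along lines. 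This is one of two places I anticipate real work.

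\emph{Sufficiency} ($\mathcal{C} \subseteq \mathcal{B}$). This is the direction that resisted the authors of~\cite{ML09}, and I expect it to be the main obstacle. The strategy is to take an arbitrary $\rho_{AB} \in \mathcal{C}$ and explicitly realize it as a convex combination $\rho_{AB} = \sum_j p_j \rho_{AB}^j$ with each $\rho_{AB}^j \in \mathcal{A}$. Since $\mathcal{A}$ is exactly the set of rank-$\leq 2$ states with matched $AB$/$B$ spectra — i.e. states admitting a \emph{pure} symmetric extension — the task is to decompose $\rho_{AB}$ into states each carrying a pure symmetric extension, and then purify-and-mix to build $\rho_{ABB'}$. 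I would first reduce to a canonical form: using local unitaries on $B$ (and on $A$) one can bring $\rho_{AB}$ to a normal form (e.g. with $\rho_B$ diagonal, or the full state in the standard Bloch-correlation normal form with diagonal correlation matrix), noting that both $\mathcal{B}$ and $\mathcal{C}$ are invariant under $U_A \otimes U_B$. Then, leveraging the quantum-marginal viewpoint advertised in the introduction, I would try to construct the extension fiber by fiber over the spectrum of $\rho_B$: write $\rho_B = \sum_k q_k \proj{k}$ and look for $\rho_{ABB'}$ of a correspondingly block-structured form, reducing the existence question to a finite-dimensional feasibility problem whose solvability is governed precisely by the inequality $\tr(\rho_B^2) \geq \tr(\rho_{AB}^2) - 4\sqrt{\det \rho_{AB}}$. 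The crux will be showing that the inequality is not merely necessary but actually \emph{produces} the decomposition; I expect this to hinge on a careful analysis of the extreme points of $\mathcal{C}$ and a matching argument showing every extreme point of $\mathcal{C}$ lies in $\mathcal{A}$ (so that Minkowski/Krein–Milman gives $\mathcal{C} = \mathrm{conv}(\mathcal{A}) = \mathcal{B}$).

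\emph{Where the difficulty concentrates.} Modulo the canonical-form reductions, the whole theorem comes down to understanding the extreme points of the convex body $\mathcal{C}$ and verifying each admits a pure symmetric extension. I expect to spend most of the effort parametrizing boundary states of $\mathcal{C}$ (those saturating the inequality, together with rank-deficient states), checking that a boundary state either already lies in $\mathcal{A}$ or is a nontrivial convex combination of such, and handling the degenerate cases (e.g. $\rho_B$ maximally mixed, or $\rho_{AB}$ itself rank $\leq 2$) separately. The identity $\tr(\rho_{AB}^2) = \tr(\rho_A^2) + \tr(\rho_B^2) + \|T\|^2$-type decompositions in the Bloch picture should make the inequality tractable, but converting the resulting algebraic constraint into an explicit ensemble of pure-extendible states is the step I am least sure will be short.
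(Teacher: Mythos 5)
Your high-level skeleton coincides with the paper's: necessity via ``$\mathcal{A}\subset\mathcal{C}$ plus convexity of $\mathcal{C}$,'' sufficiency via showing that the boundary (extreme points) of $\mathcal{C}$ decomposes into states of $\mathcal{A}$, with rank-deficient cases treated separately. The observation that $\mathcal{A}\subset\mathcal{C}$ because states in $\mathcal{A}$ have rank $\le 2$ and saturate $\tr(\rho_B^2)=\tr(\rho_{AB}^2)$ is correct. However, both load-bearing steps are left as plans rather than proofs, and the specific substitutes you float for them are the places where the argument would fail or stall.

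For convexity of $\mathcal{C}$: a Hessian/concavity check of $f(\rho_{AB})=\tr(\rho_B^2)-\tr(\rho_{AB}^2)+4\sqrt{\det\rho_{AB}}$ cannot work, because $f$ is not concave --- $\tr(\rho_B^2)$ is a \emph{convex} function of $\rho_{AB}$, and on $4\times 4$ positive matrices only $\det^{1/4}$ is concave, not $\det^{1/2}$. So $\{f\ge 0\}$ is convex for reasons invisible to curvature of $f$, and no natural presentation of $\mathcal{C}$ as an intersection of simpler convex sets is known. The paper instead proves convexity by exhibiting a supporting hyperplane at every boundary point: a determinant-expansion argument handles rank-deficient boundary states, and for full-rank boundary states Jacobi's formula produces the explicit normal $H_{AB}(\sigma_{AB})=\sqrt{\det\sigma_{AB}}\,\sigma_{AB}^{-1}-\sigma_{AB}+\sigma_B$; proving that $\tr\big(H_{AB}(\sigma_{AB})\rho_{AB}\big)\ge 0$ for all $\rho_{AB}\in\mathcal{C}$ requires a nontrivial factorized parametrization of $\mathcal{C}$ and a multi-stage optimization occupying two appendices. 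For sufficiency, your ``fiber over the spectrum of $\rho_B$'' feasibility problem is not developed enough to assess, and it is not the mechanism that works. The paper's route reuses the hyperplane data: the three-qubit operator $H=H_{AB}+H_{AB'}$ is shown to be positive using the $B\leftrightarrow B'$ swap symmetry, the face $\mathcal{L}(\sigma_{AB})\cap\partial\mathcal{C}$ is identified with reduced states of density operators supported on the (generically two-dimensional) ground space of $H$, and an explicit parametrization of that face shows every full-rank boundary state is a convex combination of two rank-2 boundary states, which in turn decompose into $\mathcal{A}$ by the known rank-2 result. Rank-3 boundary states do not saturate $f=0$ and need a separate explicit perturbation argument (perturbing within the kernel of $\sigma_{AB}$ while preserving $\tr(\sigma_B^2)=\tr(\sigma_{AB}^2)$); your proposal gestures at ``degenerate cases'' but supplies no such construction. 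In short: the architecture is right, but the two theorems that constitute the actual content of the result are missing.
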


Our key insight for obtaining this result relies largely on the physical pictures from the study of the quantum marginal problem, regarding the structure of $\mathcal{B}$. Notice that $\mathcal{B}$ is a convex set, therefore for any point $\sigma_{AB}\in\partial\mathcal{B}$, where $\partial{\mathcal{B}}$ denotes the boundary of ${\mathcal{B}}$, there exists a supporting hyperplane through $\sigma_{AB}$, which is associated with an observable $H_{AB}(\sigma_{AB})$. That is, $\tr\big(H_{AB}(\sigma_{AB})\cdot\rho_{AB}\big)\geq 0$ holds for any $\rho_{AB}\in \mathcal{B}$. This induces a Hamiltonian $H=H_{AB}+H_{AB'}$ for the three-qubit system $ABB'$, which has the symmetric extension $\rho_{ABB'}$ supported on the ground-state space of $H$.

If it were indeed the case that $\mathcal{B}=\mathcal{C}$, then $\mathcal{C}$ must inherit all the above-mentioned properties of the convex body $\mathcal{B}$. These observations then hint for the structure of the intersection of $\partial\mathcal{C}$ with the supporting hyperplane associated with $H_{AB}(\sigma_{AB})$, which are in fact faces of the convex body $\mathcal{C}$.

\textit{The necessary condition}-- We first prove the necessary condition of Theorem~\ref{th:main}, which, we will observe below, will follow if we prove $\mathcal{C}$ is convex. A natural approach here would be to
assume that for any $\rho_{AB},\sigma_{AB}\in\mathcal{C}$, their convex combination
$p\rho_{AB}+(1-p)\sigma_{AB}$ for any $p\in[0,1]$ is also in $\mathcal{C}$. However the characterization
of $\mathcal{C}$ by Eq.~\eqref{eq:conj} involves the square root of a determinant, which is
not easy to handle directly.

We instead take another slightly different approach.
Our idea is to use the fact that a closed set with nonempty interior is convex if every point on its boundary has a supporting hyperplane~\cite{boyd2004convex}. Therefore our goal is to find such a supporting hyperplane for any $\sigma_{AB}\in\partial{\mathcal{C}}$.

To achieve our goal, we will need to characterize the boundary of $\mathcal{C}$ (i.e. $\partial{\mathcal{C}}$). Let $f(\sigma_{AB})=\tr(\sigma_B^2)- \tr(\sigma_{AB}^2)+4\sqrt{\det{\sigma_{AB}}}$. We have the following result.  
\begin{lemma}
\label{lm:pC}
$\partial{\mathcal{C}}$ contains all states $\sigma_{AB}\in\mathcal{C}$ without full rank  (i.e. has rank $<4$) and all full rank states $\sigma_{AB}\in\mathcal{C}$ satisfying $f(\sigma_{AB})=0$.
\end{lemma}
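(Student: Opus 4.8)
\emph{Proposed proof.} The plan is to regard $\mathcal{C}$ as a subset of $\mathcal{D}$, the compact convex body of two-qubit density operators inside the real affine space of trace-one Hermitian operators on $\mathbb{C}^2\otimes\mathbb{C}^2$; there $\mathcal{D}$ has nonempty interior, namely the full-rank states, and since $\det\sigma_{AB}\geq 0$ on $\mathcal{D}$ the map $f$ is continuous, so $\mathcal{C}=\mathcal{D}\cap\{f\geq 0\}$ is closed. For a rank-deficient $\sigma_{AB}\in\mathcal{C}$ the claim is immediate: it lies on $\partial\mathcal{D}$, hence every neighborhood of it contains trace-one Hermitian operators that are not positive semidefinite and so lie outside $\mathcal{C}\subseteq\mathcal{D}$; together with $\sigma_{AB}\in\mathcal{C}=\overline{\mathcal{C}}$ this gives $\sigma_{AB}\in\partial\mathcal{C}$.

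For a full-rank $\sigma_{AB}\in\mathcal{C}$ with $f(\sigma_{AB})=0$ more is needed, because here $\sigma_{AB}\in\mathrm{int}\,\mathcal{D}$ and $f$ is smooth near $\sigma_{AB}$ (as $\det\sigma_{AB}>0$), so being on $\partial\mathcal{C}$ is not automatic. It suffices to produce a traceless Hermitian $X$ with $\frac{d}{dt}f(\sigma_{AB}+tX)\big|_{t=0}<0$: for small $t>0$ the perturbed state is still full rank but has $f<0$, hence leaves $\mathcal{C}$, which with $\sigma_{AB}\in\mathcal{C}$ forces $\sigma_{AB}\in\partial\mathcal{C}$. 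Such an $X$ fails to exist only if $\sigma_{AB}$ is a critical point of $f$ restricted to the hyperplane $\{\tr\rho=1\}$, i.e.\ $\nabla f(\sigma_{AB})=\mu I$ for some $\mu\in\mathbb{R}$ (gradient in the Hilbert--Schmidt inner product). At this point I would use that $f$ is homogeneous of degree two: Euler's identity gives $\tr\big(\nabla f(\sigma_{AB})\,\sigma_{AB}\big)=2f(\sigma_{AB})=0$, so $\mu=\mu\tr\sigma_{AB}=0$ and the obstruction collapses to $\nabla f(\sigma_{AB})=0$.

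It then remains to rule this out, which is the one genuinely computational step. One computes $\nabla f(\rho)=2\big(I_A\otimes\rho_B-\rho+\sqrt{\det\rho}\,\rho^{-1}\big)$; setting this to zero at $\rho=\sigma_{AB}$ and multiplying by $\sigma_{AB}$ on the left and on the right gives $\sigma_{AB}^2-\sigma_{AB}(I_A\otimes\sigma_B)=\sqrt{\det\sigma_{AB}}\,I=\sigma_{AB}^2-(I_A\otimes\sigma_B)\sigma_{AB}$, hence $[\sigma_{AB},I_A\otimes\sigma_B]=0$. Simultaneously diagonalizing the two operators turns the vanishing-gradient condition into one scalar quadratic per common eigenvector whose unique positive root depends only on the corresponding eigenvalue of $I_A\otimes\sigma_B$; therefore $\sigma_{AB}$ shares the eigenspaces of $I_A\otimes\sigma_B$, so $\sigma_{AB}=a\,I_A\otimes\sigma_B+bI$, and applying $\tr_A$ with $\tr_A\sigma_{AB}=\sigma_B$ pins down $(a,b)$ so that either $\sigma_{AB}=\tfrac12\,I_A\otimes\sigma_B$ or $\sigma_{AB}=\tfrac14 I$. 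In both cases a short check gives $f(\sigma_{AB})=\tfrac12\neq 0$, contradicting $f(\sigma_{AB})=0$; so no such critical point occurs, the descent direction $X$ exists, and $\sigma_{AB}\in\partial\mathcal{C}$.

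Finally I would remark that the containment is in fact an equality: a full-rank state with $f>0$ lies in $\mathrm{int}\,\mathcal{C}$ because both "full rank" and "$f>0$" are open conditions on $\mathcal{D}$, so $\partial\mathcal{C}$ is precisely the union of the two families in the statement, which is the description used in the convexity argument that follows. The main obstacle throughout is the classification of the zeros of $\nabla f$ in the last paragraph: the homogeneity of $f$ is what eliminates the Lagrange multiplier, and the commutation relation $[\sigma_{AB},I_A\otimes\sigma_B]=0$ is what reduces the problem to a one-parameter family where an explicit contradiction can be extracted.
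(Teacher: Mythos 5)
Your proof is correct, and it takes a genuinely different route from the paper's in both halves. For the rank-deficient case the paper expands $\det(y\rho_{AB}+\sigma_{AB})$ in $y$ and uses the vanishing of the constant and linear coefficients to exhibit a supporting hyperplane through $\sigma_{AB}$ (an object it then reuses in the convexity argument), whereas you get boundary membership for free from the topological fact that a singular state already lies on the boundary of the set of all density operators containing $\mathcal{C}$; your route is shorter, but it does not hand you the supporting hyperplane the paper needs later. For the full-rank case the two arguments share the same core --- the first-order variation of $f$ is $2\tr\big(H_{AB}(\sigma_{AB})\,M_{AB}\big)$ with $H_{AB}=\sqrt{\det\sigma_{AB}}\,\sigma_{AB}^{-1}-\sigma_{AB}+\sigma_B$ --- but you are more careful than the paper about restricting perturbations to the trace-one slice: the paper only rules out $H_{AB}=0$, while the true obstruction to a traceless descent direction is $H_{AB}\propto\mathbb{I}$, and your observation that $f$ is homogeneous of degree two, so Euler's identity $\tr(H_{AB}\sigma_{AB})=f(\sigma_{AB})=0$ kills the multiplier, is exactly what closes that small gap. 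To exclude $H_{AB}=0$ the paper multiplies by $\sigma_{AB}^{1/2}$ on both sides to get $\sqrt{\det\sigma_{AB}}\,\mathbb{I}\le\sigma_{AB}^2$, which forces $\sigma_{AB}\propto\mathbb{I}$ in three lines; your commutator-plus-simultaneous-diagonalization analysis reaches the same contradiction (every candidate has $f=\tfrac12$) but is longer than necessary --- note that your per-eigenvector quadratic $\lambda^2-\mu\lambda=c>0$ with $\mu>0$ already gives $\lambda^2>c=\sqrt{\det\sigma_{AB}}$ for every eigenvalue of $\sigma_{AB}$, including the smallest, which is an immediate contradiction and essentially recovers the paper's inequality. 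Your closing remark that the containment is an equality matches the paper's observation that full-rank states with $f>0$ are interior.
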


To show the validity of Lemma~\ref{lm:pC}, we first consider the case where $\sigma_{AB}$ is without full rank.  Consider the polynomial
$
\det(y\rho_{AB}+\sigma_{AB})=\sum_{k=0}^4 c_k(\rho_{AB})y^k
$
for $\rho_{AB}\in{\mathcal{C}}$.
Define $h(\rho_{AB})=c_1(\rho_{AB})$. Notice that $c_0(\rho_{AB})=0$,  and
$\det(y\rho_{AB}+\sigma_{AB})\geq 0$ when $y\rightarrow 0^{+}$. Furthermore, $h(\sigma_{AB})=0$. This implies that $\{X: h(X)=0\}$ is a supporting hyperplane at $\sigma_{AB}$. Hence it follows that any $\sigma_{AB}\in\mathcal{C}$ without full rank is in $\partial{\mathcal{C}}$, and furthermore there is always a supporting hyperplane at $\sigma_{AB}$.


We then discuss the case that $\sigma_{AB}\in \partial{\mathcal{C}}$
is of full rank (i.e. rank $4$).  In this case, we show that all $\sigma_{AB}\in \partial{\mathcal{C}}$ are characterized by $f(\sigma_{AB})=0$. To see this, notice that $\sigma_{AB}$ lies on the boundary if every neighbourhood of $\sigma_{AB}$ contains at least one point in $\mathcal{C}$ and at least one point not in $\mathcal{C}$.

For any Hermitian operator $M_{AB}$, we have the following expansion by Jacobi's formula (see e.g.~\cite{magnus1988matrix}),
\begin{eqnarray}
\label{eq:Jacobi}
&&f(\sigma_{AB}+\epsilon M_{AB})-f(\sigma_{AB})\nonumber\\
&=&2 \tr\big(H_{AB}(\sigma_{AB})\cdot M_{AB}\big)\epsilon+O(\epsilon^2),
\end{eqnarray}
where
\begin{equation}
\label{eq:H}
H_{AB}(\sigma_{AB})=\sqrt{\det\sigma_{AB}} \sigma_{AB}^{-1}- \sigma_{AB}+\sigma_B.
\end{equation}

Now for any full rank state $\sigma_{AB}$ satisfying the strict inequality $f(\sigma_{AB})>0$, we can always find an open ball centered at $\sigma_{AB}$ over which the strict inequality always holds; i.e., $\sigma_{AB}$ is an interior point. On the other hand, if $f(\sigma_{AB})=0$, then we can always choose suitable Hermitian operators $M_{AB}, M_{AB}^{\prime}$ such that $\tr\big(H_{AB}(\sigma_{AB}) \cdot M_{AB}\big)>0$ and $\tr\big(H_{AB}(\sigma_{AB}) \cdot M_{AB}^{\prime}\big)<0$ unless $H_{AB}(\sigma_{AB})=0$. The latter cannot occur, as it would imply $\sqrt{\det{\sigma_{AB}}} \mathbb{I}_{AB}=\sigma_{AB}^2-\sigma_{AB}^{\frac{1}{2}}\sigma_B\sigma_{AB}^{\frac{1}{2}}\leq\sigma_{AB}^2$. The last inequality holds only if $\sigma_{AB} \propto \mathbb{I}_{AB}$ which immediately contradicts $f(\frac{\mathbb{I}}{4})=\frac{1}{2}$. Hence any full rank states satisfying $f(\sigma_{AB})=0$ are boundary points of $\mathcal{C}$.

Provided the full characterization of $\partial{\mathcal{C}}$ given by Lemma~\ref{lm:pC}, especially the form of Eqs.~\eqref{eq:Jacobi} and ~\eqref{eq:H}, our main result of this section is then the following theorem.
\begin{theorem}\label{thm:hyperplane}
For any full rank state $\sigma_{AB}\in\partial{\mathcal{C}}$ and $H_{AB}(\sigma_{AB})$ as given in Eq.~\eqref{eq:H}, the inequality
\begin{equation}
\label{eq:hyperplane}
\tr\big(H_{AB}(\sigma_{AB})\cdot \rho_{AB}\big)\geq 0
\end{equation}
holds for any $\rho_{AB}\in \mathcal{C}$.
\end{theorem}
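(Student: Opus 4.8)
The plan is to show that the affine hyperplane $\{X:\tr(H_{AB}(\sigma_{AB})\cdot X)=0\}$ is a supporting hyperplane of $\mathcal{C}$ at the point $\sigma_{AB}$. That the hyperplane passes through $\sigma_{AB}$ is immediate: using $\tr(\sigma_B\cdot\rho_{AB})=\tr(\sigma_B\rho_B)$ one finds $\tr\big(H_{AB}(\sigma_{AB})\cdot\sigma_{AB}\big)=4\sqrt{\det\sigma_{AB}}-\tr(\sigma_{AB}^2)+\tr(\sigma_B^2)=f(\sigma_{AB})$, and this vanishes because $\sigma_{AB}$ is a full-rank point of $\partial\mathcal{C}$ and hence satisfies $f(\sigma_{AB})=0$ by Lemma~\ref{lm:pC}. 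So the real content of the theorem is that the linear functional $g(\rho_{AB}):=\tr\big(H_{AB}(\sigma_{AB})\cdot\rho_{AB}\big)$ is nonnegative on all of $\mathcal{C}$ --- equivalently, that $g$ attains its minimum over $\mathcal{C}$ at $\sigma_{AB}$, with value $0$. Morally this says $H_{AB}(\sigma_{AB})$ is exactly the ``local Hamiltonian'' suggested by the quantum-marginal picture; proving it without already knowing $\mathcal{B}=\mathcal{C}$ is what makes it subtle.

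Since we are not permitted to invoke convexity of $\mathcal{C}$ (that is the very thing we are trying to deduce), I would argue by direct optimization. The functional $g$ is continuous and $\mathcal{C}$ is compact (it is closed and bounded, being contained in the set of density operators), so $g$ attains a minimum at some $\rho^{\star}\in\mathcal{C}$. A non-constant linear functional has no interior extremum, so $\rho^{\star}\in\partial\mathcal{C}$ unless $H_{AB}(\sigma_{AB})$ is a multiple of the identity; but since $\tr(H_{AB}(\sigma_{AB})\cdot\sigma_{AB})=0$ this would force $H_{AB}(\sigma_{AB})=0$, which is ruled out in the proof of Lemma~\ref{lm:pC}. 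Hence $\rho^{\star}\in\partial\mathcal{C}$, and by Lemma~\ref{lm:pC} either (i) $\rho^{\star}$ has rank $<4$, or (ii) $\rho^{\star}$ has full rank and $f(\rho^{\star})=0$. As $g(\sigma_{AB})=0$ already gives $\min_{\mathcal{C}}g\le 0$, it suffices to prove $g(\rho^{\star})\ge 0$ in each case.

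For case (ii) I would use first-order optimality: near a full-rank $\rho^{\star}$ with $f(\rho^{\star})=0$, the set $\mathcal{C}$ coincides with the smooth region $\{f\ge 0\}$, whose inward normal at $\rho^{\star}$ is $2H_{AB}(\rho^{\star})\neq 0$ by Eq.~\eqref{eq:Jacobi} (nonvanishing again by Lemma~\ref{lm:pC}); minimizing the linear $g$ over this region forces $H_{AB}(\sigma_{AB})=2\mu\,H_{AB}(\rho^{\star})+c\,\mathbb{I}_{AB}$ for some $\mu\ge 0$, $c\in\mathbb{R}$, with $\mu>0$ (otherwise $H_{AB}(\sigma_{AB})\propto\mathbb{I}$ again). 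Since $\tr(H_{AB}(\rho^{\star})\cdot\rho^{\star})=f(\rho^{\star})=0$ this gives $g(\rho^{\star})=c$, and substituting the same relation into $g(\sigma_{AB})=0$ yields $c=-2\mu\,\tr\big(H_{AB}(\rho^{\star})\cdot\sigma_{AB}\big)$; so case (ii) reduces to the ``dual'' inequality $\tr\big(H_{AB}(\rho^{\star})\cdot\sigma_{AB}\big)\le 0$. For case (i), since $\det\rho^{\star}=0$ the quantity simplifies to $g(\rho^{\star})=\sqrt{\det\sigma_{AB}}\,\tr(\sigma_{AB}^{-1}\rho^{\star})-\tr(\sigma_{AB}\rho^{\star})+\tr(\sigma_B\rho^{\star}_B)$ under the constraint $\tr((\rho^{\star})^2)\le\tr((\rho^{\star}_B)^2)$; here I would reduce $\rho^{\star}$ to the rank-deficient extreme points of $\mathcal{C}$ (pure product states and lower-rank states on $\partial\mathcal{C}$) and verify the inequality there using elementary estimates such as $\bra{v}\sigma_{AB}^{-1}\ket{v}\ge 1/\bra{v}\sigma_{AB}\ket{v}$ together with $f(\sigma_{AB})=0$, or else bound $g(\rho^{\star})$ via Minkowski's determinant inequality for $\sigma_{AB}$ and $\rho^{\star}$.

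The step I expect to be the real obstacle is closing case (ii). The proportionality $H_{AB}(\sigma_{AB})=2\mu H_{AB}(\rho^{\star})+c\,\mathbb{I}_{AB}$ combined with $f(\sigma_{AB})=f(\rho^{\star})=0$ is a strong constraint relating the two full-rank boundary points, and I would expect that converting it into $\tr(H_{AB}(\rho^{\star})\cdot\sigma_{AB})\le 0$ --- equivalently, that the minimum of $g$ is actually attained at $\sigma_{AB}$ itself --- cannot be done by purely operator-theoretic manipulation of the square-rooted determinant, but needs the explicit structure of two-qubit density matrices: writing $\sigma_{AB}$ and $\rho^{\star}$ in a common eigenbasis or in Bloch form, using the resulting algebraic identities for $\det\sigma_{AB}$, $\sigma_{AB}^{-1}$ and $\sigma_B$, and perhaps an explicit description of the faces of $\mathcal{C}$ cut out by the hyperplane. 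This is consistent with the paper's own remarks about the ``intrinsic hardness'' of the problem and the difficulty of handling $\sqrt{\det\sigma_{AB}}$ directly.
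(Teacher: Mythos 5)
Your proposal does not reach a proof; it is a reduction that terminates in two unproven claims, one of which you yourself flag as the ``real obstacle.'' The variational setup is fine as far as it goes: $g(\rho)=\tr\big(H_{AB}(\sigma_{AB})\cdot\rho\big)$ attains its minimum on the compact set $\mathcal{C}$ at some $\rho^{\star}\in\partial\mathcal{C}$, and Lemma~\ref{lm:pC} splits the boundary into the rank-deficient stratum and the full-rank stratum $\{f=0\}$. But in case (ii) your KKT argument only converts the statement ``$g$ is minimized at $\sigma_{AB}$'' into the statement $\tr\big(H_{AB}(\rho^{\star})\cdot\sigma_{AB}\big)\le 0$, which is an inequality of exactly the same type between two full-rank boundary points (the roles of the two points partially exchanged), and you offer no mechanism for proving it --- there is no symmetry or bootstrapping that lets you deduce it from what you already have, so the reduction is not progress. (There is also a smaller unaddressed point: the constraint qualification requires ruling out $H_{AB}(\rho^{\star})\propto\mathbb{I}$, not merely $H_{AB}(\rho^{\star})=0$, which is all the argument in Lemma~\ref{lm:pC} excludes.) In case (i) the passage to ``rank-deficient extreme points'' and the proposed verification via $\bra{v}\sigma_{AB}^{-1}\ket{v}\ge 1/\bra{v}\sigma_{AB}\ket{v}$ or Minkowski's determinant inequality is a list of candidate tools, not an argument; no inequality is actually established for any rank-$\le 3$ state in $\mathcal{C}$.

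For comparison, the paper does not argue via an unknown minimizer at all. It first re-parameterizes $\mathcal{C}$ by block factorizations $\rho_{AB}=\left(\begin{smallmatrix}Q&R\\P&0\end{smallmatrix}\right)\left(\begin{smallmatrix}Q^{\dagger}&P\\R&0\end{smallmatrix}\right)$ with $P,R\ge 0$ and $\Vert PR\Vert_{\mathrm{tr}}^2\ge\Vert PQ^{\dagger}\Vert_{\mathrm{tr}}^2-\Vert PQ\Vert_{\mathrm{tr}}^2$ (Appendix~I), writes $\tr(H_{AB}(\sigma_{AB})\rho_{AB})$ explicitly in these coordinates, and then eliminates the free parameters of $\rho_{AB}$ one at a time (Cauchy--Schwarz in $R_1$, an orthogonal projection in the diagonal of $Q_1$), arriving at a manifestly nonnegative expression plus two residual quadratic terms whose coefficients are shown to vanish precisely because $f(\sigma_{AB})=0$, via the determinant identities $\det H_{AB}^{(1,3)}=\det H_{AB}^{(2,4)}$ and $\det H_{AB}^{(1,2,3)}=-\det H_{AB}^{(1,2,4)}$. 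That explicit computation is the content your outline is missing; without it (or a substitute for closing your case (ii)), the theorem is not proved.
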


Note that the equality of Eq.~\eqref{eq:hyperplane} holds when $\rho_{AB}=\sigma_{AB}$.
Eq.~\eqref{eq:hyperplane} then means that
for any full rank $\sigma_{AB}\in \partial{\mathcal{C}}$,
there is a supporting hyperplane of $\mathcal{C}$ which can be characterized by
\begin{eqnarray}
\label{eq:L}
\mathcal{L}(\sigma_{AB}):=
\big\{X: \tr \big( H_{AB}(\sigma_{AB}) \cdot X\big)=0\big\}.
\end{eqnarray}
In order to show the validity of Eq.~\eqref{eq:hyperplane},
we will need another characterization of the set $\mathcal{C}$,
and follow a straightforward step-by-step optimization procedure that involves a lengthy calculation.
We provide the technical details in Appendix~I \& II.

To summarize, we have thus shown that for any $\sigma_{AB}\in\partial{\mathcal{C}}$, with or without full rank, there exists a supporting hyperplane at $\sigma_{AB}$. This then concludes the proof that $\mathcal{C}$ is convex.

A direct consequence of the convexity of $\mathcal{C}$ is that $\mathcal{B}\subseteq \mathcal{C}$. To see why, we can easily verify that $\mathcal{A}\subset \mathcal{C}$. Additionally, $\mathcal{B}$ is the convex hull of $\mathcal{A}$. Therefore the convex hull of $\mathcal{A}$ is a subset of the convex hull of $\mathcal{C}$ which is again $\mathcal{C}$, and thus we have $\mathcal{B}\subseteq\mathcal{C}$, as required.

\textit{The sufficient condition}--
To prove the sufficiency of Theorem~\ref{th:main},
we will need to show that any state in $\mathcal{C}$ can be
represented as a convex combination of some states in $\mathcal{A}$.
In fact, given the convexity of $\mathcal{C}$, it suffices to show
this for $\sigma_{AB}\in\partial{\mathcal{C}}$.

Furthermore, we only need to deal with the cases where $\sigma_{AB}\in\partial{\mathcal{C}}$ is of full rank or rank $3$. The rank $1$ case is obvious and the rank $2$ case has already been solved in~\cite{ML09}. That is, any rank $2$ state $\rho_{AB}\in{\mathcal{C}}$ can be written as a convex combination of two states in ${\mathcal{A}}$, hence $\rho_{AB}$ is symmetric extendible.

For the full rank case, let us first build up some intuition by imagining what should happen if $\mathcal{B}=\mathcal{C}$. According to Eq.~\eqref{eq:hyperplane}, for any $\sigma_{AB}\in\partial{\mathcal{C}}$,
there exists a supporting hyperplane $\mathcal{L}(\sigma_{AB})$ given by all the $X$ satisfying
$\tr \big( H_{AB}(\sigma_{AB}) \cdot X\big)=0$, where $H_{AB}(\sigma_{AB})$ given in Eq.~\eqref{eq:H} is a Hermitian operator acting on the qubits $A$ and $B$.

Now let us consider the following operator $H$ acting on the three-qubit system $ABB'$:
\begin{equation}
H=H_{AB}+H_{AB'}.
\end{equation}

Note that $H$ can be viewed as a Hamiltonian of the system $ABB'$. The symmetric extension of $\sigma_{AB}$, denoted by $\sigma_{ABB'}$, should have energy zero as $\tr(H\sigma_{ABB'})=\tr(H_{AB}\sigma_{AB})+\tr(H_{AB'}\sigma_{AB'})$.

Furthermore, we show that $H$ is positive. Since $H$ is symmetric when swapping $BB'$, we can always find a complete set of eigenstates $\{\ket{\psi_i}\}_{i=1}^8$ of $H$, such that for each $\ket{\psi_i}$,
$\tr_B(\ket{\psi_i}\bra{\psi_i})=\tr_{B'}(\ket{\psi_i}\bra{\psi_i})$.
This is because if there is any eigenstate $\ket{\phi}$ of $H$ with energy $E_{\phi}$ which does not satisfy $\tr_B(\ket{\phi}\bra{\phi})=\tr_{B'}(\ket{\phi}\bra{\phi})$, then the state $\ket{\phi'}=S\ket{\phi}$ is also an eigenstate of $H$ with the same energy $E_{\phi}$, where $S:=\text{SWAP}_{BB'}$ is the swap operation acting on the qubits $BB'$. Therefore we can re-choose the eigenstates with energy $E_{\phi}$ as $\varphi=1/\sqrt{2}(\ket{\phi}+\ket{\phi'})$ and
$\varphi'=1/\sqrt{2}(\ket{\phi}-\ket{\phi'})$,
then we will have $\tr_B(\ket{\varphi}\bra{\varphi})=\tr_{B'}(\ket{\varphi}\bra{\varphi})$ and $\tr_B(\ket{\varphi'}\bra{\varphi'})=\tr_{B'}(\ket{\varphi'}\bra{\varphi'})$.

It then directly follows from Eq.~\eqref{eq:hyperplane} that for this complete set of eigenstates $\{\ket{\psi_i}\}_{i=1}^8$ with $\tr_B(\ket{\psi_i}\bra{\psi_i})=\tr_{B'}(\ket{\psi_i}\bra{\psi_i})$, $\tr(H\ket{\psi_i}\bra{\psi_i})\geq 0$. That is, $H$ is positive. Therefore, $\sigma_{ABB'}$ with zero energy is supported on the ground-state space of $H$ (for general discussion on supporting hyperplanes and the ground-state space, see e.g.~\cite{Erd72,CJR+12,CJZZ12}).


Because $H$ is symmetric when swapping $BB'$, generically, the ground-state space of $H$ should be doubly degenerate. To see this, if $\ket{\psi_0}$ is a ground state of $H$, $S\ket{\psi_0}$ is also a ground state of $H$. And generically, $S\ket{\psi_0}$ should be linear independent from $\ket{\psi_0}$.

Let us now denote the ground-state space of $H$ by $V_H$, which is generically two-dimensional, and define
\begin{equation*}
\mathcal{F}:=\big\{\rho_{AB}|\rho_{AB}=\tr_{B'}\rho_{ABB'},\rho_{ABB'}\ \text{supported on}\ V_H\big\}.
\end{equation*}

Note that $\mathcal{F}\subset \partial\mathcal{C}$ and $\mathcal{F}$ is in fact a face of the convex body $\mathcal{C}$. And we have that for $\sigma_{AB}\in\mathcal{F}$,
its symmetric extension $\sigma_{ABB'}$ is supported on the ground-state space of $H$.
This indicates that $\mathcal{F}=\mathcal{L}(\sigma_{AB})\bigcap \partial{\mathcal{C}}$.

Because $V_H$ is generically two-dimensional, any state supported on $V_H$ can be parameterized by a two-dimensional unitary operator $U$, and the two eigenvalues $\lambda_0,\lambda_1$
of any state that is supported on $V_H$ (with $\lambda_0+\lambda_1=1$). That is, any state $\rho_{ABB'}$ supported on $V_H$, is of the form, in some chosen orthonormal basis of $\{\ket{\psi_1},\ket{\psi_2}\}$ of $V_H$, as
\begin{equation*}
\rho_{ABB'}(\lambda_0,\lambda_1,U)=U(\lambda_0\ket{\psi_0}\bra{\psi_0}+\lambda_1\ket{\psi_1}\bra{\psi_1})U^{\dag}.
\end{equation*}
Consequently, any state $\rho_{AB}=\tr_{B'}\rho_{ABB'}\in\mathcal{L}(\sigma_{AB})\bigcap \partial{\mathcal{C}}$ can be also parametrized by $\lambda_0,\lambda_1,U$, that
we can denote as $\rho_{AB}(\lambda_0,\lambda_1,U)$.

Furthermore, any $\rho_{ABB'}(\lambda_0,\lambda_1,U)$ has the obvious decomposition
\begin{eqnarray*}
\rho_{ABB'}(\lambda_0,\lambda_1,U)
=\lambda_0\rho_{ABB'}(1,0,U)+\lambda_1\rho_{ABB'}(0,1,U),
\end{eqnarray*}
where both $\rho_{ABB'}(1,0,U)$ and $\rho_{ABB'}(0,1,U)$
are three-qubit pure states. As a result,
\begin{equation*}
\rho_{AB}(\lambda_0,\lambda_1,U)=\lambda_0\rho_{AB}(1,0,U)+\lambda_1\rho_{AB}(0,1,U),
\end{equation*}
where both $\rho_{AB}(1,0,U)$ and $\rho_{AB'}(0,1,U)$ are in $\partial\mathcal{C}$
and of rank $2$.

Summarizing the discussion above, for a full rank $\sigma_{AB}\in\partial\mathcal{C}$, we shall expect that generically, any state in $\mathcal{L}(\sigma_{AB})\bigcap \partial{\mathcal{C}}$ can be parameterized by a two-dimensional unitary $U$ and two real parameters $\lambda_0,\lambda_1$, denoted as $\rho_{AB}(\lambda_0,\lambda_1,U)$. And any such $\rho_{AB}(\lambda_0,\lambda_1,U)$ can always  be written as a convex combination of two rank $2$ states in $\partial\mathcal{C}$. Detailed analysis of $\mathcal{L}(\sigma_{AB})\bigcap \partial{\mathcal{C}}$ shows this is not only generically the case, but also always the case. This is given as the following theorem. We shall provide the technical details of the proof in Appendix~III). 
\begin{theorem}
\label{thm:suf}
Every full rank $\sigma_{AB}\in\partial\mathcal{C}$ can be written as a convex combination of two rank $2$ states in $\partial\mathcal{C}$.
\end{theorem}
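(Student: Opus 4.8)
\emph{Proof strategy.} The plan is to realize a full rank $\sigma_{AB}\in\partial\mathcal{C}$ as the $B'$-marginal of an $S$-invariant density operator supported on the ground-state space $V_H=\ker H$ of $H:=H_{AB}(\sigma_{AB})+H_{AB'}(\sigma_{AB})$, with $H_{AB}(\sigma_{AB})$ the supporting-hyperplane observable of \eqref{eq:H}, and then to split that operator into two pure pieces whose $AB$-marginals are the required rank-$2$ states. Recall $H\ge 0$ and $[H,S]=0$ for $S=\text{SWAP}_{BB'}$.

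First I would determine $V_H$. Since $\ker H$ is $S$-invariant it is spanned by $\pm1$-eigenvectors of $S$. I would prove that for full rank $\sigma_{AB}$ the space $V_H$ is two-dimensional: that $V_H\neq\{0\}$, that $\dim V_H=1$ is impossible because a state supported on it would be pure on $ABB'$ with $AB$-marginal of rank $\le 2$, and that $\dim V_H>2$ either does not occur or admits a direct reduction. Writing $V_H=\mathrm{span}\{\ket{s_+},\ket{s_-}\}$ with $S\ket{s_\pm}=\pm\ket{s_\pm}$ when $S|_{V_H}$ has two distinct eigenvalues, and observing that otherwise $V_H$ lies inside the symmetric or antisymmetric sector of $BB'$, the $S$-invariant density operators on $V_H$ form either the segment $\{p\proj{s_+}+(1-p)\proj{s_-}:p\in[0,1]\}$ or a Bloch ball — in every case a set each of whose points is a convex combination of two $S$-eigen pure states $\ket{\phi_0},\ket{\phi_1}\in V_H$.

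The central step is to show $\sigma_{AB}$ actually lies in the $\tr_{B'}$-image of the $S$-invariant states on $V_H$. I would use that $\mathcal{C}$, $f$, and the assignment $\sigma_{AB}\mapsto H_{AB}(\sigma_{AB})$ are all equivariant under local unitaries $U_A\otimes U_B$, bring $\sigma_{AB}$ (equivalently $H_{AB}(\sigma_{AB})$) to a normal form, diagonalize $H$ explicitly on the eight-dimensional space $ABB'$ guided by \eqref{eq:Jacobi}--\eqref{eq:H}, and then solve $\sigma_{AB}=\tr_{B'}\rho_{ABB'}$ for an $S$-invariant $\rho_{ABB'}$ on $V_H$ (in the generic case this is the single real equation $\sigma_{AB}=p\,\tr_{B'}\proj{s_+}+(1-p)\,\tr_{B'}\proj{s_-}$ for $p$), checking solvability. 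Writing $\rho_{ABB'}=\lambda_0\proj{\phi_0}+\lambda_1\proj{\phi_1}$ with the $S$-eigen pure states $\ket{\phi_0},\ket{\phi_1}$ above and tracing out $B'$ gives $\sigma_{AB}=\lambda_0\,\tr_{B'}\proj{\phi_0}+\lambda_1\,\tr_{B'}\proj{\phi_1}$. Each $\ket{\phi_i}$ being an $S$-eigenvector, $\tr_{B'}\proj{\phi_i}=\tr_B\proj{\phi_i}$, so $\tr_{B'}\proj{\phi_i}\in\mathcal{B}\subseteq\mathcal{C}$, and being of rank $\le 2$ it lies in $\partial\mathcal{C}$ by Lemma~\ref{lm:pC}. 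Finally, full rank of $\sigma_{AB}$ forces $\lambda_0,\lambda_1>0$ (else $\sigma_{AB}$ would have rank $\le 2$) and, by subadditivity of rank, $4=\mathrm{rank}(\sigma_{AB})\le\mathrm{rank}(\tr_{B'}\proj{\phi_0})+\mathrm{rank}(\tr_{B'}\proj{\phi_1})\le 4$, so both marginals have rank exactly $2$ and are distinct; this is the asserted decomposition, and as a bonus $\rho_{ABB'}$ is a symmetric extension of $\sigma_{AB}$.

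The main obstacle is the central step carried out for \emph{every} full rank boundary state rather than just generic ones: proving that $\dim V_H=2$ and that $\sigma_{AB}$ genuinely lies in the $\tr_{B'}$-image of the $S$-invariant states on $V_H$. I expect this to require a judicious normal form for $\sigma_{AB}$ (or $H_{AB}(\sigma_{AB})$), an explicit diagonalization of $H$, and a case analysis over the degenerate configurations ($\dim V_H>2$, $V_H$ contained in a swap eigensector, or a would-be rank-$2$ endpoint collapsing to rank $1$, which must be excluded). This is the lengthy calculation deferred to Appendix~III.
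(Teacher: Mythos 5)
There is a genuine gap, and it sits exactly where your proposal says the ``main obstacle'' is. Your plan is the Hamiltonian/ground-space picture that the paper itself presents only as intuition (``let us first build up some intuition by imagining what should happen if $\mathcal{B}=\mathcal{C}$''), and the step you defer --- showing that $\sigma_{AB}$ genuinely lies in the $\tr_{B'}$-image of the $S$-invariant states on $V_H$ --- \emph{is} the theorem in disguise: such a preimage is precisely a symmetric extension of $\sigma_{AB}$, whose existence for boundary states is the content of the sufficiency direction you are trying to establish. The easy containment goes the wrong way: any $S$-invariant state supported on $V_H$ has its $AB$-marginal in $\mathcal{L}(\sigma_{AB})\cap\mathcal{C}$ (zero energy plus $\mathcal{B}\subseteq\mathcal{C}$), but nothing yet guarantees that the particular point $\sigma_{AB}$ of that face is hit. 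Your sub-argument ruling out $\dim V_H=1$ has the same circularity: it only yields a contradiction if you already know the extension of $\sigma_{AB}$ is supported on $V_H$. ``Solve for $p$ and check solvability'' and ``a case analysis over the degenerate configurations'' is where all of the work lives, and none of it is carried out; in particular you give no mechanism for excluding $\dim V_H>2$, $V_H$ collapsing into one swap eigensector, or an endpoint degenerating to rank $1$.

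The paper's actual proof (Appendix~III) avoids the tripartite system entirely. It characterizes the face $\mathcal{F}(\sigma_{AB})=\mathcal{L}(\sigma_{AB})\cap\mathcal{C}$ by recording the equality conditions in each step of the optimization used to prove Theorem~\ref{thm:hyperplane}, which forces explicit formulas for $(P_1,Q_1,R_1)$ in terms of a unitary $U$ and two nonnegative numbers $x,y$ (Lemma~\ref{lemma:pa}). The decisive observation is that every matrix entry of the resulting $\widetilde{\rho}_{AB}(x,y,U)$ is a linear combination of $x^2$ and $y^2$, so the admissible segment $y\le x\le \big\vert\frac{a_{12}b_{11}-b_{12}a_{11}}{a_{21}b_{22}-a_{22}b_{21}}\big\vert\, y$ in the $(x^2,y^2)$-plane maps affinely into $\mathcal{F}(\sigma_{AB})$, with the two endpoints being exactly the rank-$2$ states ($R_1=0$ there). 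That linearity is what makes the two-point decomposition automatic for \emph{every} full-rank boundary state, degenerate configurations included --- it is the ingredient your route would still have to manufacture by hand. If you want to salvage your approach, you would need to prove directly (e.g.\ from a normal form for $H_{AB}(\sigma_{AB})$ in \eqref{eq:H} and an explicit diagonalization of $H$) that the marginal map from $S$-invariant states on $V_H$ is onto the face containing $\sigma_{AB}$; as written, that claim is asserted rather than proved.
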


Furthermore, because any rank $2$ state $\rho_{AB}\in{\mathcal{C}}$ can be written as a convex combination of two states in ${\mathcal{A}}$, it follows that any full rank $\sigma_{AB}\in\partial\mathcal{C}$ can be written as a convex combination of states in ${\mathcal{A}}$, hence is symmetric extendible.

Now consider the case where $\sigma_{AB}\in \partial{\mathcal{C}}$ has rank $3$.
Let $\ket{\phi}$ be the state in $\ker{\sigma_{AB}}$. Notice that since any two-qubit state is local unitary equivalent to the state $a\ket{00}+b\ket{11}$ for some $a,b$, we can always write $\sigma_{AB}$ in the following form without loss of generality:
\begin{eqnarray*}
\sigma_{AB}=\left(\begin{array}{cccc}
\vert b\vert^2& b^{\ast}x^{\ast}& b^{\ast}y^{\ast} & -a b^{\ast} \\
bx & \vert r\vert^2 &t & -ax\\
by& t^{\ast}& \vert s\vert^2 & -ay\\
-a^{\ast}b& -a^{\ast}x^{\ast}& -a^{\ast}y^{\ast}& \vert a\vert^2
\end{array}\right).
\end{eqnarray*}

Let us choose the Hermitian operator
\begin{equation*}
M_{AB}=\left(\begin{array}{cccc}
0 & b^{\ast}p^{\ast} & b^{\ast}q^{\ast} & 0\\
bp & 0 & 0 & -ap\\
bq & 0 & 0 & -aq\\
0 & -a^{\ast}p^{\ast} & -a^{\ast}q^{\ast} & 0
\end{array}\right),
\end{equation*}
where $p,q$ are constants to be fixed later and define $\sigma(\epsilon)=\sigma_{AB}+\epsilon M_{AB}$.


Then
\begin{eqnarray}
&&\tr (\sigma(\epsilon)_B^2)-\tr (\sigma(\epsilon)_{AB}^2)\nonumber\\
&=&\tr((\sigma_B+\epsilon M_B)^2)- \tr((\sigma_{AB}+\epsilon M_{AB})^2)\nonumber\\
&=&\tr(\sigma_B^2)-\tr(\sigma_{AB}^2)+\epsilon^2(\tr(M_B^2)-\tr(M_{AB}^2))\nonumber\\
&&+2\epsilon (\tr(\sigma_BM_B)-\tr(\sigma_{AB}M_{AB}))\nonumber\\
&=&-2\vert ap+b^{\ast}q^{\ast}\vert^2 \epsilon^2-4\Re\big((ap+b^{\ast}q^{\ast})(a^{\ast}x^{\ast}+by)\big)\epsilon , \nonumber
\end{eqnarray}
where $\Re$ stands for the real part of a complex number.

By choosing suitable $p, q$ such that $ap+b^{\ast}q^{\ast}=0$, we will  have $\tr (\sigma(\epsilon)_B^2)=\tr(\sigma(\epsilon)_{AB}^2)$ which implies $\sigma(\epsilon) \in \partial{\mathcal{C}}$ if $\sigma(\epsilon)$ is a density operator. $M_{AB}$ is a traceless operator whose kernel also contains $\ket{\phi}$, therefore with growing $\epsilon$ in either direction, we will have positive $\epsilon_{+}$ and negative $\epsilon_{-}$ such that $\sigma(\epsilon_i)=\sigma_{AB}+\epsilon_i M_{AB} \in \partial{\mathcal{C}}$ and $\text{rank}(\sigma(\epsilon_i))\leq 2$ for any $i\in\{+,-\}$. Hence, $\sigma_{AB}$ of rank $3$ can be written as a convex combination of at most two states from $\mathcal{A}$.

This concludes the proof of the sufficiency condition of Theorem~\ref{th:main}.

\textit{Example}-- To better understand the physical picture, let us look at an example. Consider the two-qubit Werner state
\begin{equation*}
\rho_W(p)=(1-p)\frac{\mathbb{I}}{4}+p\ket{\phi}\bra{\phi},
\end{equation*}
where $\ket{\phi}=\frac{1}{\sqrt{2}}(\ket{00}+\ket{11})$, and $p\in[0,1]$. The equation 
\[
\tr\big(\rho^2_W(p)\big)= \tr\Big(\big(\tr_B\rho_W(p)\big)^2\Big)+4\sqrt{\det{\rho_W(p)}}
\] 
provides a unique solution of $p=\frac{2}{3}$; i.e., $\rho_W(\frac{2}{3})\in\partial\mathcal{C}$.
Further, Eq.~\eqref{eq:H} gives
\begin{equation*}
H_{AB}\left(\rho_W\left(\frac{2}{3}\right)\right)=
\begin{pmatrix}
\frac{2}{9}&0&0&-\frac{4}{9}\\
0&\frac{2}{3}&0&0\\
0&0&\frac{2}{3}&0\\
-\frac{4}{9}&0&0&\frac{2}{9}
\end{pmatrix}
\end{equation*}

The ground-state space of the Hamiltonian $H\left(\rho_W(\frac{2}{3})\right)=H_{AB}\left(\rho_W(\frac{2}{3})\right)+H_{AB'}\left(\rho_W(\frac{2}{3})\right)$ is indeed two-fold degenerate and
is spanned by
\begin{eqnarray*}
\ket{\psi_0}&=&\frac{1}{\sqrt{6}}\left(2\ket{000}+\ket{101}+\ket{110}\right),\nonumber\\
\ket{\psi_1}&=&\frac{1}{\sqrt{6}}\left(2\ket{111}+\ket{010}+\ket{001}\right).
\end{eqnarray*}

Therefore any state $\rho_{ABB'}$ supported on this ground-state space can be written as
$
\rho_{ABB'}(\lambda_0,\lambda_1,U)=U(\lambda_0\ket{\psi_0}\bra{\psi_0}+\lambda_1\ket{\psi_1}\bra{\psi_1})U^{\dag},
$
for some $2\times 2$ unitary operator $U$ acting on the two-dimensional space spanned by
$\ket{\psi_0},\ket{\psi_1}$.
And any state $\rho_{AB}$ in $\mathcal{L}\left(\rho_W(\frac{2}{3})\right)\bigcap \partial{\mathcal{C}}$ has the form
$
\rho_{AB}(\lambda_0,\lambda_1,U)
=\tr_{B'}\left(U(\lambda_0\ket{\psi_0}\bra{\psi_0}+\lambda_1\ket{\psi_1}\bra{\psi_1})U^{\dag}\right).
$

It is straightforward to check that $\rho_W(\frac{2}{3})=\rho_{AB}(\frac{1}{2},\frac{1}{2},\mathbb{I})$. In other words, the symmetric extension of $\rho_W(\frac{2}{3})$,
given by $\rho_{ABB'}(\frac{1}{2},\frac{1}{2},\mathbb{I})$, is the maximally mixed state of the ground-state space of $H\left(\rho_W(\frac{2}{3})\right)$.
Also, $\rho_W(\frac{2}{3})$ can clearly be written as the convex combination of two rank $2$ states which are also in $\mathcal{L}\left(\rho_W(\frac{2}{3})\right)\bigcap \partial{\mathcal{C}}$:
$
\rho_W\left(\frac{2}{3}\right)=\frac{1}{2}\rho_{AB}(1,0,\mathbb{I})+\frac{1}{2}\rho_{AB}(0,1,\mathbb{I}).
$

We remark that $p=\frac{2}{3}$ corresponds to a fidelity $\frac{3}{4}$ with $\ket{\phi}$. This is consistent with the result that Werner states with fidelity $\leq \frac{3}{4}$ have zero one-way distillable entanglement~\cite{BDSW96,KL97}.


\textit{Discussion} -- We have fully solved the symmetric extension problem for the two-qubit case,  
providing the first analytical necessary and sufficient condition for the quantum marginal problem with overlapping marginals.

An immediate application of our result is a full characterization for anti-degradable qubit channels, as it is known that a channel $\cal{N}$ is anti-degradable if and only if its Choi-Jamiolkowski representation $\rho_{\mathcal{N}}$ has a symmetric extension~\cite{Myhr11}. Previously analytic necessary and sufficient conditions were only known for anti-degradable unital qubit channels~\cite{CRS08,Cerf00,NG98}.

A natural question to ask is how to generalize the result to higher dimensional systems. Unfortunately, for any higher dimensions a full characterization involving only spectra is highly unlikely~\cite{ML09}.  There have been some efforts made for special cases but no general results found~\cite{Ran09a,Ran09b,JV13}. Nevertheless, our physical picture based on the convexity of $\mathcal{B}$ and the symmetry of the system may shed light on the understanding of symmetric extendibility for higher dimensional systems.


\textit{Acknowledgements}--
   The authors thank John Watrous and Debbie Leung for helpful
  discussions. JC and BZ are supported by NSERC, CIFAR and NSF of
  China (Grant No.61179030). ZJ acknowledges support from NSERC and ARO. DK was supported by NSERC and a University Research Chair at Guelph. NL acknowledges support from NSERC and ORF.
\bibliographystyle{apsrev4-1}
\bibliography{SymExt}

\begin{thebibliography}{10}%
\makeatletter
\providecommand \@ifxundefined [1]{%
 \ifx #1\undefined \expandafter \@firstoftwo
 \else \expandafter \@secondoftwo
\fi
}%
\providecommand \@ifnum [1]{%
 \ifnum #1\expandafter \@firstoftwo
 \else \expandafter \@secondoftwo
\fi
}%
\providecommand \enquote [1]{``#1''}%
\providecommand \bibnamefont  [1]{#1}%
\providecommand \bibfnamefont [1]{#1}%
\providecommand \citenamefont [1]{#1}%
\providecommand\href[0]{\@sanitize\@href}%
\providecommand\@href[1]{\endgroup\@@startlink{#1}\endgroup\@@href}%
\providecommand\@@href[1]{#1\@@endlink}%
\providecommand \@sanitize [0]{\begingroup\catcode`\&12\catcode`\#12\relax}%
\@ifxundefined \pdfoutput {\@firstoftwo}{%
 \@ifnum{\z@=\pdfoutput}{\@firstoftwo}{\@secondoftwo}%
}{%
 \providecommand\@@startlink[1]{\leavevmode\special{html:<a href="#1">}}%
 \providecommand\@@endlink[0]{\special{html:</a>}}%
}{%
 \providecommand\@@startlink[1]{%
  \leavevmode
  \pdfstartlink
   attr{/Border[0 0 1 ]/H/I/C[0 1 1]}%
   user{/Subtype/Link/A<</Type/Action/S/URI/URI(#1)>>}%
  \relax
 }%
 \providecommand\@@endlink[0]{\pdfendlink}%
}%
\providecommand \url  [0]{\begingroup\@sanitize \@url }%
\providecommand \@url [1]{\endgroup\@href {#1}{\urlprefix}}%
\providecommand \urlprefix [0]{URL }%
\providecommand \Eprint[0]{\href }%
\@ifxundefined \urlstyle {%
  \providecommand \doi [1]{doi:\discretionary{}{}{}#1}%
}{%
  \providecommand \doi [0]{doi:\discretionary{}{}{}\begingroup
  \urlstyle{rm}\Url }%
}%
\providecommand \doibase [0]{http://dx.doi.org/}%
\providecommand \Doi[1]{\href{\doibase#1}}%
\providecommand \bibAnnote [3]{%
  \BibitemShut{#1}%
  \begin{quotation}\noindent
    \textsc{Key:}\ #2\\\textsc{Annotation:}\ #3%
  \end{quotation}%
}%
\providecommand \bibAnnoteFile [2]{%
  \IfFileExists{#2}{\bibAnnote {#1} {#2} {\input{#2}}}{}%
}%
\providecommand \typeout [0]{\immediate \write \m@ne }%
\providecommand \selectlanguage [0]{\@gobble}%
\providecommand \bibinfo [0]{\@secondoftwo}%
\providecommand \bibfield [0]{\@secondoftwo}%
\providecommand \translation [1]{[#1]}%
\providecommand \BibitemOpen[0]{}%
\providecommand \bibitemStop [0]{}%
\providecommand \bibitemNoStop [0]{.\EOS\space}%
\providecommand \EOS [0]{\spacefactor3000\relax}%
\providecommand \BibitemShut [1]{\csname bibitem#1\endcsname}%
\bibitem{doherty02a}%
  \BibitemOpen
  \bibfield{author}{%
  \bibinfo {author} {\bibfnamefont{A.~C.}\ \bibnamefont{Doherty}}, \bibinfo
  {author} {\bibfnamefont{P.~A.}\ \bibnamefont{Parrilo}},\ and\ \bibinfo
  {author} {\bibfnamefont{F.~M.}\ \bibnamefont{Spedalieri}},\ }%
  \bibfield{journal}{%
  \Doi{10.1103/PhysRevLett.88.187904}{\bibinfo {journal} {Phys. Rev. Lett.}}\
  }%
  \textbf{\bibinfo {volume} {88}},\ \bibinfo {pages} {187904} (\bibinfo {month}
  {Apr}\ \bibinfo {year} {2002})%
  \bibAnnoteFile{NoStop}{doherty02a}%
\bibitem{VB96}%
  \BibitemOpen
  \bibfield{author}{%
  \bibinfo {author} {\bibfnamefont{L.}~\bibnamefont{Vandenberghe}}\ and\
  \bibinfo {author} {\bibfnamefont{S.}~\bibnamefont{Boyd}},\ }%
  \bibfield{journal}{%
  \bibinfo {journal} {SIAM Review}\ }%
  \textbf{\bibinfo {volume} {38}},\ \bibinfo {pages} {49} (\bibinfo {year}
  {1996})%
  \bibAnnoteFile{NoStop}{VB96}%
\bibitem{DPS04}%
  \BibitemOpen
  \bibfield{author}{%
  \bibinfo {author} {\bibfnamefont{A.~C.}\ \bibnamefont{Doherty}}, \bibinfo
  {author} {\bibfnamefont{P.~A.}\ \bibnamefont{Parrilo}},\ and\ \bibinfo
  {author} {\bibfnamefont{F.~M.}\ \bibnamefont{Spedalieri}},\ }%
  \bibfield{journal}{%
  \Doi{10.1103/PhysRevA.69.022308}{\bibinfo {journal} {Phys. Rev. A}}\ }%
  \textbf{\bibinfo {volume} {69}},\ \bibinfo {pages} {022308} (\bibinfo {month}
  {Feb}\ \bibinfo {year} {2004})%
  \bibAnnoteFile{NoStop}{DPS04}%
\bibitem{DPS05}%
  \BibitemOpen
  \bibfield{author}{%
  \bibinfo {author} {\bibfnamefont{A.~C.}\ \bibnamefont{Doherty}}, \bibinfo
  {author} {\bibfnamefont{P.~A.}\ \bibnamefont{Parrilo}},\ and\ \bibinfo
  {author} {\bibfnamefont{F.~M.}\ \bibnamefont{Spedalieri}},\ }%
  \bibfield{journal}{%
  \Doi{10.1103/PhysRevA.71.032333}{\bibinfo {journal} {Phys. Rev. A}}\ }%
  \textbf{\bibinfo {volume} {71}},\ \bibinfo {pages} {032333} (\bibinfo {month}
  {Mar}\ \bibinfo {year} {2005})%
  \bibAnnoteFile{NoStop}{DPS05}%
\bibitem{NOP09}%
  \BibitemOpen
  \bibfield{author}{%
  \bibinfo {author} {\bibfnamefont{M.}~\bibnamefont{Navascu\'es}}, \bibinfo
  {author} {\bibfnamefont{M.}~\bibnamefont{Owari}},\ and\ \bibinfo {author}
  {\bibfnamefont{M.~B.}\ \bibnamefont{Plenio}},\ }%
  \bibfield{journal}{%
  \Doi{10.1103/PhysRevA.80.052306}{\bibinfo {journal} {Phys. Rev. A}}\ }%
  \textbf{\bibinfo {volume} {80}},\ \bibinfo {pages} {052306} (\bibinfo {month}
  {Nov}\ \bibinfo {year} {2009})%
  \bibAnnoteFile{NoStop}{NOP09}%
\bibitem{BC12}%
  \BibitemOpen
  \bibfield{author}{%
  \bibinfo {author} {\bibfnamefont{F.~G. S.~L.}\ \bibnamefont{Brand\~ao}}\ and\
  \bibinfo {author} {\bibfnamefont{M.}~\bibnamefont{Christandl}},\ }%
  \bibfield{journal}{%
  \Doi{10.1103/PhysRevLett.109.160502}{\bibinfo {journal} {Phys. Rev. Lett.}}\
  }%
  \textbf{\bibinfo {volume} {109}},\ \bibinfo {pages} {160502} (\bibinfo
  {month} {Oct}\ \bibinfo {year} {2012})%
  \bibAnnoteFile{NoStop}{BC12}%
\bibitem{Per96}%
  \BibitemOpen
  \bibfield{author}{%
  \bibinfo {author} {\bibfnamefont{A.}~\bibnamefont{Peres}},\ }%
  \bibfield{journal}{%
  \Doi{10.1103/PhysRevLett.77.1413}{\bibinfo {journal} {Phys. Rev. Lett.}}\ }%
  \textbf{\bibinfo {volume} {77}},\ \bibinfo {pages} {1413} (\bibinfo {month}
  {Aug}\ \bibinfo {year} {1996})%
  \bibAnnoteFile{NoStop}{Per96}%
\bibitem{HHH96}%
  \BibitemOpen
  \bibfield{author}{%
  \bibinfo {author} {\bibfnamefont{M.}~\bibnamefont{Horodecki}}, \bibinfo
  {author} {\bibfnamefont{P.}~\bibnamefont{Horodecki}},\ and\ \bibinfo {author}
  {\bibfnamefont{R.}~\bibnamefont{Horodecki}},\ }%
  \bibfield{journal}{%
  \bibinfo {journal} {Physics Letters A}\ }%
  \textbf{\bibinfo {volume} {223}},\ \bibinfo {pages} {1} (\bibinfo {year}
  {1996})%
  \bibAnnoteFile{NoStop}{HHH96}%
\bibitem{HHH01}%
  \BibitemOpen
  \bibfield{author}{%
  \bibinfo {author} {\bibfnamefont{M.}~\bibnamefont{Horodecki}}, \bibinfo
  {author} {\bibfnamefont{P.}~\bibnamefont{Horodecki}},\ and\ \bibinfo {author}
  {\bibfnamefont{R.}~\bibnamefont{Horodecki}},\ }%
  in\ \emph{\bibinfo {booktitle} {Quantum information}}\ (\bibinfo {publisher}
  {Springer},\ \bibinfo {year} {2001})\ pp.\ \bibinfo {pages} {151--195}%
  \bibAnnoteFile{NoStop}{HHH01}%
\bibitem{Hor97}%
  \BibitemOpen
  \bibfield{author}{%
  \bibinfo {author} {\bibfnamefont{P.}~\bibnamefont{Horodecki}},\ }%
  \bibfield{journal}{%
  \bibinfo {journal} {Physics Letters A}\ }%
  \textbf{\bibinfo {volume} {232}},\ \bibinfo {pages} {333} (\bibinfo {year}
  {1997})%
  \bibAnnoteFile{NoStop}{Hor97}%
\bibitem{HL00}%
  \BibitemOpen
  \bibfield{author}{%
  \bibinfo {author} {\bibfnamefont{P.}~\bibnamefont{Horodecki}}\ and\ \bibinfo
  {author} {\bibfnamefont{M.}~\bibnamefont{Lewenstein}},\ }%
  \bibfield{journal}{%
  \Doi{10.1103/PhysRevLett.85.2657}{\bibinfo {journal} {Phys. Rev. Lett.}}\ }%
  \textbf{\bibinfo {volume} {85}},\ \bibinfo {pages} {2657} (\bibinfo {month}
  {Sep}\ \bibinfo {year} {2000})%
  \bibAnnoteFile{NoStop}{HL00}%
\bibitem{BDC03}%
  \BibitemOpen
  \bibfield{author}{%
  \bibinfo {author} {\bibfnamefont{B.~M.}\ \bibnamefont{Terhal}}, \bibinfo
  {author} {\bibfnamefont{A.~C.}\ \bibnamefont{Doherty}},\ and\ \bibinfo
  {author} {\bibfnamefont{D.}~\bibnamefont{Schwab}},\ }%
  \bibfield{journal}{%
  \Doi{10.1103/PhysRevLett.90.157903}{\bibinfo {journal} {Phys. Rev. Lett.}}\
  }%
  \textbf{\bibinfo {volume} {90}},\ \bibinfo {pages} {157903} (\bibinfo {month}
  {Apr}\ \bibinfo {year} {2003})%
  \bibAnnoteFile{NoStop}{BDC03}%
\bibitem{NL09}%
  \BibitemOpen
  \bibfield{author}{%
  \bibinfo {author} {\bibfnamefont{M.~L.}\ \bibnamefont{Nowakowski}}\ and\
  \bibinfo {author} {\bibfnamefont{P.}~\bibnamefont{Horodecki}},\ }%
  \bibfield{journal}{%
  \bibinfo {journal} {Journal of Physics A: Mathematical and Theoretical}\ }%
  \textbf{\bibinfo {volume} {42}},\ \bibinfo {pages} {135306} (\bibinfo {year}
  {2009})%
  \bibAnnoteFile{NoStop}{NL09}%
\bibitem{BDE+98}%
  \BibitemOpen
  \bibfield{author}{%
  \bibinfo {author} {\bibfnamefont{D.}~\bibnamefont{Bru\ss{}}}, \bibinfo
  {author} {\bibfnamefont{D.~P.}\ \bibnamefont{DiVincenzo}}, \bibinfo {author}
  {\bibfnamefont{A.}~\bibnamefont{Ekert}}, \bibinfo {author}
  {\bibfnamefont{C.~A.}\ \bibnamefont{Fuchs}}, \bibinfo {author}
  {\bibfnamefont{C.}~\bibnamefont{Macchiavello}},\ and\ \bibinfo {author}
  {\bibfnamefont{J.~A.}\ \bibnamefont{Smolin}},\ }%
  \bibfield{journal}{%
  \Doi{10.1103/PhysRevA.57.2368}{\bibinfo {journal} {Phys. Rev. A}}\ }%
  \textbf{\bibinfo {volume} {57}},\ \bibinfo {pages} {2368} (\bibinfo {month}
  {Apr}\ \bibinfo {year} {1998})%
  \bibAnnoteFile{NoStop}{BDE+98}%
\bibitem{Myhr11}%
  \BibitemOpen
  \bibfield{author}{%
  \bibinfo {author} {\bibfnamefont{G.}~\bibnamefont{{Ove Myhr}}},\ }%
  \emph{\bibinfo {title} {{Symmetric Extension of Bipartite Quantum States and
  Its Use in Quantum Key Distribution with Two-way Postprocessing}}},\ Ph.D.
  thesis,\ \bibinfo {school} {Friedrich-Alexander-Universit\"{a}t
  Erlangen-N\"{u}rnberg} (\bibinfo {year} {2011}),\
  \Eprint{http://arxiv.org/abs/1103.0766}{arXiv:1103.0766}%
  \bibAnnoteFile{NoStop}{Myhr11}%
\bibitem{scarani09a}%
  \BibitemOpen
  \bibfield{author}{%
  \bibinfo {author} {\bibfnamefont{V.}~\bibnamefont{Scarani}}, \bibinfo
  {author} {\bibfnamefont{H.}~\bibnamefont{Bechmann-Pasquinucci}}, \bibinfo
  {author} {\bibfnamefont{N.~J.}\ \bibnamefont{Cerf}}, \bibinfo {author}
  {\bibfnamefont{M.}~\bibnamefont{Du\ifmmode~\check{s}\else \v{s}\fi{}ek}},
  \bibinfo {author} {\bibfnamefont{N.}~\bibnamefont{L\"utkenhaus}},\ and\
  \bibinfo {author} {\bibfnamefont{M.}~\bibnamefont{Peev}},\ }%
  \bibfield{journal}{%
  \Doi{10.1103/RevModPhys.81.1301}{\bibinfo {journal} {Rev. Mod. Phys.}}\ }%
  \textbf{\bibinfo {volume} {81}},\ \bibinfo {pages} {1301} (\bibinfo {month}
  {Sep}\ \bibinfo {year} {2009})%
  \bibAnnoteFile{NoStop}{scarani09a}%
\bibitem{cachin97a}%
  \BibitemOpen
  \bibfield{author}{%
  \bibinfo {author} {\bibfnamefont{C.}~\bibnamefont{Cachin}}\ and\ \bibinfo
  {author} {\bibfnamefont{U.~M.}\ \bibnamefont{Maurer}},\ }%
  \bibfield{journal}{%
  \Doi{10.1007/s001459900023}{\bibinfo {journal} {J.Cryptol.}}\ }%
  \textbf{\bibinfo {volume} {10}},\ \bibinfo {pages} {97} (\bibinfo {year}
  {1997})%
  \bibAnnoteFile{NoStop}{cachin97a}%
\bibitem{gottesman03a}%
  \BibitemOpen
  \bibfield{author}{%
  \bibinfo {author} {\bibfnamefont{D.}~\bibnamefont{Gottesman}}\ and\ \bibinfo
  {author} {\bibfnamefont{H.-K.}\ \bibnamefont{Lo}},\ }%
  \bibfield{journal}{%
  \Doi{10.1109/TIT.2002.807289}{\bibinfo {journal} {IEEE Trans. Inf. Theory}}\
  }%
  \textbf{\bibinfo {volume} {49}},\ \bibinfo {pages} {457} (\bibinfo {year}
  {2003}),\ ISSN \bibinfo {issn} {0018-9448}%
  \bibAnnoteFile{NoStop}{gottesman03a}%
\bibitem{chau02a}%
  \BibitemOpen
  \bibfield{author}{%
  \bibinfo {author} {\bibfnamefont{H.~F.}\ \bibnamefont{Chau}},\ }%
  \bibfield{journal}{%
  \Doi{10.1103/PhysRevA.66.060302}{\bibinfo {journal} {Phys. Rev. A}}\ }%
  \textbf{\bibinfo {volume} {66}},\ \bibinfo {pages} {60302} (\bibinfo {year}
  {2002})%
  \bibAnnoteFile{NoStop}{chau02a}%
\bibitem{kraus05a}%
  \BibitemOpen
  \bibfield{author}{%
  \bibinfo {author} {\bibfnamefont{B.}~\bibnamefont{Kraus}}, \bibinfo {author}
  {\bibfnamefont{N.}~\bibnamefont{Gisin}},\ and\ \bibinfo {author}
  {\bibfnamefont{R.}~\bibnamefont{Renner}},\ }%
  \bibfield{journal}{%
  \Doi{10.1103/PhysRevLett.95.080501}{\bibinfo {journal} {Phys. Rev. Lett.}}\
  }%
  \textbf{\bibinfo {volume} {95}},\ \bibinfo {pages} {080501} (\bibinfo {month}
  {Aug}\ \bibinfo {year} {2005})%
  \bibAnnoteFile{NoStop}{kraus05a}%
\bibitem{MRDL09}%
  \BibitemOpen
  \bibfield{author}{%
  \bibinfo {author} {\bibfnamefont{G.~O.}\ \bibnamefont{Myhr}}, \bibinfo
  {author} {\bibfnamefont{J.~M.}\ \bibnamefont{Renes}}, \bibinfo {author}
  {\bibfnamefont{A.~C.}\ \bibnamefont{Doherty}},\ and\ \bibinfo {author}
  {\bibfnamefont{N.}~\bibnamefont{L\"utkenhaus}},\ }%
  \bibfield{journal}{%
  \Doi{10.1103/PhysRevA.79.042329}{\bibinfo {journal} {Phys. Rev. A}}\ }%
  \textbf{\bibinfo {volume} {79}},\ \bibinfo {pages} {042329} (\bibinfo {month}
  {Apr}\ \bibinfo {year} {2009})%
  \bibAnnoteFile{NoStop}{MRDL09}%
\bibitem{ML09}%
  \BibitemOpen
  \bibfield{author}{%
  \bibinfo {author} {\bibfnamefont{G.~O.}\ \bibnamefont{Myhr}}\ and\ \bibinfo
  {author} {\bibfnamefont{N.}~\bibnamefont{L\"utkenhaus}},\ }%
  \bibfield{journal}{%
  \Doi{10.1103/PhysRevA.79.062307}{\bibinfo {journal} {Phys. Rev. A}}\ }%
  \textbf{\bibinfo {volume} {79}},\ \bibinfo {pages} {062307} (\bibinfo {month}
  {Jun}\ \bibinfo {year} {2009})%
  \bibAnnoteFile{NoStop}{ML09}%
\bibitem{Kly06}%
  \BibitemOpen
  \bibfield{author}{%
  \bibinfo {author} {\bibfnamefont{A.~A.}\ \bibnamefont{Klyachko}},\ }%
  \bibfield{journal}{%
  \Doi{10.1088/1742-6596/36/1/014}{\bibinfo {journal} {J. Phys.: Conf. Ser.}}\
  }%
  \textbf{\bibinfo {volume} {36}},\ \bibinfo {pages} {72} (\bibinfo {year}
  {2006})%
  \bibAnnoteFile{NoStop}{Kly06}%
\bibitem{Col63}%
  \BibitemOpen
  \bibfield{author}{%
  \bibinfo {author} {\bibfnamefont{A.~J.}\ \bibnamefont{Coleman}},\ }%
  \bibfield{journal}{%
  \Doi{10.1103/RevModPhys.35.668}{\bibinfo {journal} {Rev. Mod. Phys.}}\ }%
  \textbf{\bibinfo {volume} {35}},\ \bibinfo {pages} {668} (\bibinfo {month}
  {Jul}\ \bibinfo {year} {1963})%
  \bibAnnoteFile{NoStop}{Col63}%
\bibitem{Erd72}%
  \BibitemOpen
  \bibfield{author}{%
  \bibinfo {author} {\bibfnamefont{R.~M.}\ \bibnamefont{{Erdahl}}},\ }%
  \bibfield{journal}{%
  \Doi{10.1063/1.1665885}{\bibinfo {journal} {J. Math. Phys.}}\ }%
  \textbf{\bibinfo {volume} {13}},\ \bibinfo {pages} {1608} (\bibinfo {month}
  {Oct.}\ \bibinfo {year} {1972})%
  \bibAnnoteFile{NoStop}{Erd72}%
\bibitem{AK08}%
  \BibitemOpen
  \bibfield{author}{%
  \bibinfo {author} {\bibfnamefont{M.}~\bibnamefont{{Altunbulak}}}\ and\
  \bibinfo {author} {\bibfnamefont{A.}~\bibnamefont{{Klyachko}}},\ }%
  \bibfield{journal}{%
  \Doi{10.1007/s00220-008-0552-z}{\bibinfo {journal} {Comm. Math. Phys.}}\ }%
  \textbf{\bibinfo {volume} {282}},\ \bibinfo {pages} {287} (\bibinfo {month}
  {Sep.}\ \bibinfo {year} {2008})%
  \bibAnnoteFile{NoStop}{AK08}%
\bibitem{Liu06}%
  \BibitemOpen
  \bibfield{author}{%
  \bibinfo {author} {\bibfnamefont{Y.-K.}\ \bibnamefont{Liu}},\ }%
  in\ \Doi{10.1007/11830924_40}{\emph{\bibinfo {booktitle} {Approximation,
  Randomization, and Combinatorial Optimization. Algorithms and Techniques}}},\
  \bibinfo {series} {Lecture Notes in Computer Science}, Vol.\ \bibinfo
  {volume} {4110},\ \bibinfo {editor} {edited by\ \bibinfo {editor}
  {\bibfnamefont{J.}~\bibnamefont{Diaz}}, \bibinfo {editor}
  {\bibfnamefont{K.}~\bibnamefont{Jansen}}, \bibinfo {editor}
  {\bibfnamefont{J.~D.}\ \bibnamefont{Rolim}},\ and\ \bibinfo {editor}
  {\bibfnamefont{U.}~\bibnamefont{Zwick}}}\ (\bibinfo {publisher} {Springer
  Berlin Heidelberg},\ \bibinfo {year} {2006})\ pp.\ \bibinfo {pages}
  {438--449},\ ISBN \bibinfo {isbn} {978-3-540-38044-3}%
  \bibAnnoteFile{NoStop}{Liu06}%
\bibitem{LCV07}%
  \BibitemOpen
  \bibfield{author}{%
  \bibinfo {author} {\bibfnamefont{Y.-K.}\ \bibnamefont{Liu}}, \bibinfo
  {author} {\bibfnamefont{M.}~\bibnamefont{Christandl}},\ and\ \bibinfo
  {author} {\bibfnamefont{F.}~\bibnamefont{Verstraete}},\ }%
  \bibfield{journal}{%
  \Doi{10.1103/PhysRevLett.98.110503}{\bibinfo {journal} {Phys. Rev. Lett.}}\
  }%
  \textbf{\bibinfo {volume} {98}},\ \bibinfo {pages} {110503} (\bibinfo {month}
  {Mar}\ \bibinfo {year} {2007})%
  \bibAnnoteFile{NoStop}{LCV07}%
\bibitem{WMN10}%
  \BibitemOpen
  \bibfield{author}{%
  \bibinfo {author} {\bibfnamefont{T.-C.}\ \bibnamefont{Wei}}, \bibinfo
  {author} {\bibfnamefont{M.}~\bibnamefont{Mosca}},\ and\ \bibinfo {author}
  {\bibfnamefont{A.}~\bibnamefont{Nayak}},\ }%
  \bibfield{journal}{%
  \Doi{10.1103/PhysRevLett.104.040501}{\bibinfo {journal} {Phys. Rev. Lett.}}\
  }%
  \textbf{\bibinfo {volume} {104}},\ \bibinfo {pages} {040501} (\bibinfo
  {month} {Jan}\ \bibinfo {year} {2010})%
  \bibAnnoteFile{NoStop}{WMN10}%
\bibitem{Smi65}%
  \BibitemOpen
  \bibfield{author}{%
  \bibinfo {author} {\bibfnamefont{D.~W.}\ \bibnamefont{{Smith}}},\ }%
  \bibfield{journal}{%
  \Doi{10.1063/1.1701504}{\bibinfo {journal} {J. Chem. Phys.}}\ }%
  \textbf{\bibinfo {volume} {43}},\ \bibinfo {pages} {258} (\bibinfo {month}
  {Jul.}\ \bibinfo {year} {1965})%
  \bibAnnoteFile{NoStop}{Smi65}%
\bibitem{CLL13}%
  \BibitemOpen
  \bibfield{author}{%
  \bibinfo {author} {\bibfnamefont{E.~A.}\ \bibnamefont{{Carlen}}}, \bibinfo
  {author} {\bibfnamefont{J.~L.}\ \bibnamefont{{Lebowitz}}},\ and\ \bibinfo
  {author} {\bibfnamefont{E.~H.}\ \bibnamefont{{Lieb}}},\ }%
  \bibfield{journal}{%
  \Doi{10.1063/1.4808218}{\bibinfo {journal} {J. Math. Phys.}}\ }%
  \textbf{\bibinfo {volume} {54}},\ \bibinfo {pages} {062103} (\bibinfo {month}
  {Jun.}\ \bibinfo {year} {2013})%
  \bibAnnoteFile{NoStop}{CLL13}%
\bibitem{boyd2004convex}%
  \BibitemOpen
  \bibfield{author}{%
  \bibinfo {author} {\bibfnamefont{S.}~\bibnamefont{Boyd}}\ and\ \bibinfo
  {author} {\bibfnamefont{L.}~\bibnamefont{Vandenberghe}},\ }%
  \emph{\bibinfo {title} {Convex Optimization}}\ (\bibinfo {publisher}
  {Cambridge university press},\ \bibinfo {year} {2004})%
  \bibAnnoteFile{NoStop}{boyd2004convex}%
\bibitem{magnus1988matrix}%
  \BibitemOpen
  \bibfield{author}{%
  \bibinfo {author} {\bibfnamefont{J.~R.}\ \bibnamefont{Magnus}}\ and\ \bibinfo
  {author} {\bibfnamefont{H.}~\bibnamefont{Neudecker}},\ }%
  \emph{\bibinfo {title} {Matrix Differential Calculus with Applications in
  Statistics and Econometrics}}\ (\bibinfo {publisher} {Wiley},\ \bibinfo
  {year} {1988})%
  \bibAnnoteFile{NoStop}{magnus1988matrix}%
\bibitem{CJR+12}%
  \BibitemOpen
  \bibfield{author}{%
  \bibinfo {author} {\bibfnamefont{J.}~\bibnamefont{{Chen}}}, \bibinfo {author}
  {\bibfnamefont{Z.}~\bibnamefont{{Ji}}}, \bibinfo {author}
  {\bibfnamefont{M.~B.}\ \bibnamefont{{Ruskai}}}, \bibinfo {author}
  {\bibfnamefont{B.}~\bibnamefont{{Zeng}}},\ and\ \bibinfo {author}
  {\bibfnamefont{D.-L.}\ \bibnamefont{{Zhou}}},\ }%
  \bibfield{journal}{%
  \Doi{10.1063/1.4736842}{\bibinfo {journal} {J. Math. Phys.}}\ }%
  \textbf{\bibinfo {volume} {53}},\ \bibinfo {pages} {072203} (\bibinfo {month}
  {Jul.}\ \bibinfo {year} {2012})%
  \bibAnnoteFile{NoStop}{CJR+12}%
\bibitem{CJZZ12}%
  \BibitemOpen
  \bibfield{author}{%
  \bibinfo {author} {\bibfnamefont{J.}~\bibnamefont{Chen}}, \bibinfo {author}
  {\bibfnamefont{Z.}~\bibnamefont{Ji}}, \bibinfo {author}
  {\bibfnamefont{B.}~\bibnamefont{Zeng}},\ and\ \bibinfo {author}
  {\bibfnamefont{D.~L.}\ \bibnamefont{Zhou}},\ }%
  \bibfield{journal}{%
  \Doi{10.1103/PhysRevA.86.022339}{\bibinfo {journal} {Phys. Rev. A}}\ }%
  \textbf{\bibinfo {volume} {86}},\ \bibinfo {pages} {022339} (\bibinfo {month}
  {Aug}\ \bibinfo {year} {2012})%
  \bibAnnoteFile{NoStop}{CJZZ12}%
\bibitem{BDSW96}%
  \BibitemOpen
  \bibfield{author}{%
  \bibinfo {author} {\bibfnamefont{C.~H.}\ \bibnamefont{Bennett}}, \bibinfo
  {author} {\bibfnamefont{D.~P.}\ \bibnamefont{DiVincenzo}}, \bibinfo {author}
  {\bibfnamefont{J.~A.}\ \bibnamefont{Smolin}},\ and\ \bibinfo {author}
  {\bibfnamefont{W.~K.}\ \bibnamefont{Wootters}},\ }%
  \bibfield{journal}{%
  \Doi{10.1103/PhysRevA.54.3824}{\bibinfo {journal} {Phys. Rev. A}}\ }%
  \textbf{\bibinfo {volume} {54}},\ \bibinfo {pages} {3824} (\bibinfo {month}
  {Nov}\ \bibinfo {year} {1996})%
  \bibAnnoteFile{NoStop}{BDSW96}%
\bibitem{KL97}%
  \BibitemOpen
  \bibfield{author}{%
  \bibinfo {author} {\bibfnamefont{E.}~\bibnamefont{Knill}}\ and\ \bibinfo
  {author} {\bibfnamefont{R.}~\bibnamefont{Laflamme}},\ }%
  \bibfield{journal}{%
  \Doi{10.1103/PhysRevA.55.900}{\bibinfo {journal} {Phys. Rev. A}}\ }%
  \textbf{\bibinfo {volume} {55}},\ \bibinfo {pages} {900} (\bibinfo {month}
  {Feb}\ \bibinfo {year} {1997})%
  \bibAnnoteFile{NoStop}{KL97}%
\bibitem{CRS08}%
  \BibitemOpen
  \bibfield{author}{%
  \bibinfo {author} {\bibfnamefont{T.~S.}\ \bibnamefont{Cubitt}}, \bibinfo
  {author} {\bibfnamefont{M.~B.}\ \bibnamefont{Ruskai}},\ and\ \bibinfo
  {author} {\bibfnamefont{G.}~\bibnamefont{Smith}},\ }%
  \bibfield{journal}{%
  \bibinfo {journal} {Journal of Mathematical Physics}\ }%
  \textbf{\bibinfo {volume} {49}},\ \bibinfo {pages} {102104} (\bibinfo {year}
  {2008})%
  \bibAnnoteFile{NoStop}{CRS08}%
\bibitem{Cerf00}%
  \BibitemOpen
  \bibfield{author}{%
  \bibinfo {author} {\bibfnamefont{N.~J.}\ \bibnamefont{Cerf}},\ }%
  \bibfield{journal}{%
  \Doi{10.1103/PhysRevLett.84.4497}{\bibinfo {journal} {Phys. Rev. Lett.}}\ }%
  \textbf{\bibinfo {volume} {84}},\ \bibinfo {pages} {4497} (\bibinfo {month}
  {May}\ \bibinfo {year} {2000})%
  \bibAnnoteFile{NoStop}{Cerf00}%
\bibitem{NG98}%
  \BibitemOpen
  \bibfield{author}{%
  \bibinfo {author} {\bibfnamefont{C.-S.}\ \bibnamefont{Niu}}\ and\ \bibinfo
  {author} {\bibfnamefont{R.~B.}\ \bibnamefont{Griffiths}},\ }%
  \bibfield{journal}{%
  \Doi{10.1103/PhysRevA.58.4377}{\bibinfo {journal} {Phys. Rev. A}}\ }%
  \textbf{\bibinfo {volume} {58}},\ \bibinfo {pages} {4377} (\bibinfo {month}
  {Dec}\ \bibinfo {year} {1998})%
  \bibAnnoteFile{NoStop}{NG98}%
\bibitem{Ran09a}%
  \BibitemOpen
  \bibfield{author}{%
  \bibinfo {author} {\bibfnamefont{K.~S.}\ \bibnamefont{Ranade}},\ }%
  \bibfield{journal}{%
  \Doi{10.1103/PhysRevA.80.022301}{\bibinfo {journal} {Phys. Rev. A}}\ }%
  \textbf{\bibinfo {volume} {80}},\ \bibinfo {pages} {022301} (\bibinfo {month}
  {Aug}\ \bibinfo {year} {2009})%
  \bibAnnoteFile{NoStop}{Ran09a}%
\bibitem{Ran09b}%
  \BibitemOpen
  \bibfield{author}{%
  \bibinfo {author} {\bibfnamefont{K.~S.}\ \bibnamefont{{Ranade}}},\ }%
  \bibfield{journal}{%
  \Doi{10.1088/1751-8113/42/42/425302}{\bibinfo {journal} {J. Phys. A: Math.
  Gen.}}\ }%
  \textbf{\bibinfo {volume} {42}},\ \bibinfo {eid} {425302} (\bibinfo {month}
  {Oct.}\ \bibinfo {year} {2009})%
  \bibAnnoteFile{NoStop}{Ran09b}%
\bibitem{JV13}%
  \BibitemOpen
  \bibfield{author}{%
  \bibinfo {author} {\bibfnamefont{P.~D.}\ \bibnamefont{Johnson}}\ and\
  \bibinfo {author} {\bibfnamefont{L.}~\bibnamefont{Viola}},\ }%
  \bibfield{journal}{%
  \Doi{10.1103/PhysRevA.88.032323}{\bibinfo {journal} {Phys. Rev. A}}\ }%
  \textbf{\bibinfo {volume} {88}},\ \bibinfo {pages} {032323} (\bibinfo {month}
  {Sep}\ \bibinfo {year} {2013})%
  \bibAnnoteFile{NoStop}{JV13}%
\end{thebibliography}%



\appendix
\label{appendix}

\section{\label{appendix:char}Appendix~I: A Useful Characterization of $\mathcal{C}$}

To prove our main result, we will provide another useful characterization of $\mathcal{C}=\{\rho_{AB}: \tr(\rho_B^2)\geq
\tr(\rho_{AB}^2)-4\sqrt{\det{\rho_{AB}}}\}$, the set we are mainly interested in.

For simplicity, we use $\mathbb{M}_2$ to denote the set of $2$-by-$2$ matrices.

\begin{lemma}\label{lemma:char}
\begin{eqnarray*}
\mathcal{C}&=&\Bigg\{\left(
\begin{array}{cc}
Q  &  R   \\
P  &  0
\end{array}
\right)\left(
\begin{array}{cc}
Q^{\dagger}  &  P   \\
R  &  0
\end{array}
\right):P,Q,R\in\mathbb{M}_2 \mathrm{\ such\ that\ }   \\
&& P, R\geq 0 \mathrm{\ and\ }\Vert PR\Vert_{\mathrm{tr}}^2 \geq \Vert PQ^{\dagger}\Vert_{\mathrm{tr}}^2-\Vert PQ\Vert_{\mathrm{tr}}^2.\Bigg\}.
\end{eqnarray*}
\end{lemma}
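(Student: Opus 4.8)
The plan is to prove both inclusions at once: first show that \emph{every} $4\times4$ positive semidefinite matrix $\rho$ admits a factorization of the form appearing in the lemma, and then check that for any such factorization the spectral inequality defining $\mathcal{C}$ is \emph{algebraically equivalent} to the trace-norm inequality $\Vert PR\Vert_{\mathrm{tr}}^{2}\ge\Vert PQ^{\dagger}\Vert_{\mathrm{tr}}^{2}-\Vert PQ\Vert_{\mathrm{tr}}^{2}$. Together these two facts give the set equality, since membership in either set then amounts to the same condition on $\rho$.

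The structural step is a block--Cholesky construction. Write $\rho$ in $2\times2$ blocks $\rho_{11},\rho_{12},\rho_{12}^{\dagger},\rho_{22}$ indexed by the first qubit $A$, so that summing the diagonal blocks gives $\rho_{B}=\tr_{A}\rho$. Take $P=\rho_{22}^{1/2}$, $Q=\rho_{12}P^{+}$ (Moore--Penrose inverse), and $R=(\rho_{11}-QQ^{\dagger})^{1/2}$. Then $R^{2}=\rho_{11}-\rho_{12}\rho_{22}^{+}\rho_{12}^{\dagger}$ is positive semidefinite because $\rho\ge0$ (generalized Schur complement/Albert's theorem), while $QP=\rho_{12}$ holds because $\rho\ge0$ forces $\rho_{12}$ to vanish on $\ker\rho_{22}$. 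Since $P,R\ge0$ are Hermitian, for $X=\left(\begin{smallmatrix}Q&R\\ P&0\end{smallmatrix}\right)$ the second factor in the lemma is exactly $X^{\dagger}$, so the displayed product is automatically $\ge0$ and expands to the block matrix with diagonal blocks $QQ^{\dagger}+R^{2}$ and $P^{2}$ and off-diagonal block $QP$. In particular $\det\rho=\lvert\det X\rvert^{2}=(\det P\,\det R)^{2}$, hence $\sqrt{\det\rho}=\det P\,\det R$, and $\rho_{B}=QQ^{\dagger}+R^{2}+P^{2}$.

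Next comes the reduction. Substituting the block expressions into $\tr(\rho_{B}^{2})$, $\tr(\rho^{2})$ and $\sqrt{\det\rho}$ and expanding, with $S:=QQ^{\dagger}+R^{2}$, the bulky terms $\tr(S^{2})$ and $\tr(P^{4})$ cancel between the two sides, so that $\tr(\rho_{B}^{2})\ge\tr(\rho^{2})-4\sqrt{\det\rho}$ collapses to
\[
\tr(P^{2}R^{2})+2\det P\,\det R\ \ge\ \tr\!\left(P^{2}(Q^{\dagger}Q-QQ^{\dagger})\right).
\]
To recognize both sides as trace norms I would invoke the $2\times2$ identity $\Vert M\Vert_{\mathrm{tr}}^{2}=\tr(M^{\dagger}M)+2\lvert\det M\rvert$ (the square of the sum of the two singular values): it gives $\Vert PR\Vert_{\mathrm{tr}}^{2}=\tr(P^{2}R^{2})+2\det P\,\det R$, while in $\Vert PQ^{\dagger}\Vert_{\mathrm{tr}}^{2}-\Vert PQ\Vert_{\mathrm{tr}}^{2}$ the common term $2\lvert\det(PQ)\rvert=2\det P\,\lvert\det Q\rvert$ cancels, leaving exactly $\tr(P^{2}Q^{\dagger}Q)-\tr(P^{2}QQ^{\dagger})$. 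Hence the displayed inequality is precisely $\Vert PR\Vert_{\mathrm{tr}}^{2}\ge\Vert PQ^{\dagger}\Vert_{\mathrm{tr}}^{2}-\Vert PQ\Vert_{\mathrm{tr}}^{2}$.

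Finally I would assemble the two directions. If $\rho\in\mathcal{C}$ then $\rho\ge0$, so it has the factorization above, which by the computation satisfies the trace-norm inequality; conversely any matrix of the displayed product form is $\ge0$, and running the reversible computation backwards shows its trace-norm inequality forces the spectral inequality, i.e.\ $\rho\in\mathcal{C}$. Two remarks close the loop: the computation also shows $\tr(\rho_{B}^{2})-\tr(\rho^{2})+4\sqrt{\det\rho}$ depends only on $\rho$, so the non-uniqueness of $(P,Q,R)$ for rank-deficient $\rho$ is harmless, and since this quantity is homogeneous of degree $2$ in $\rho$, normalization is irrelevant. The only genuinely delicate points are making the block--Cholesky step valid when $\rho$ is rank-deficient (the pseudoinverse/range argument, or a continuity argument from the full-rank case) and bookkeeping the block algebra carefully enough that the cancellations in the reduction are airtight; the $2\times2$ trace-norm identity is the single observation that makes everything line up.
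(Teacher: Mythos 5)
Your proposal is correct and follows essentially the same route as the paper: factor $\rho=XX^{\dagger}$ with $X=\left(\begin{smallmatrix}Q&R\\ P&0\end{smallmatrix}\right)$ via the Schur complement of the $\rho_{22}$ block (your $P=\rho_{22}^{1/2}$, $Q=\rho_{12}P^{+}$, $R=(\rho_{11}-QQ^{\dagger})^{1/2}$ coincides, after the paper's final polar adjustment of $R$, with its $B^{1/2}$, $CB^{-1/2}$, $D^{1/2}$), and then verify that the defining inequality of $\mathcal{C}$ reduces identically to the trace-norm inequality using $\sqrt{\det\rho}=\det P\det R$ and $\Vert M\Vert_{\mathrm{tr}}^{2}=\tr(M^{\dagger}M)+2|\det M|$. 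Your pseudoinverse treatment handles the singular-$\rho_{22}$ case more uniformly than the paper's separate rank-one argument, and your observation that the boundary functional is homogeneous of degree two (so normalization is immaterial) is a point the paper leaves implicit, but the substance of the argument is the same.
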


\begin{proof}
Any mixed state $\rho_{AB}$ satisfying $\tr(\rho_B^2)\geq
\tr(\rho_{AB}^2)-4\sqrt{\det{\rho_{AB}}}$ can be written in the matrix form
$\left(
\begin{array}{cc}
A  & C   \\
C^{\dagger}  & B   \\
\end{array}
\right)$
where $B$ and $A$ are $2$ by $2$ positive semidefinite matrices and $C$ is another $2$ by $2$ matrix. We first assume $B$ is invertible, then $A$ can be written as $CB^{-1}C^{\dagger}+D$ where $D$ is another $2$ by $2$ positive semidefinite matrix.

Employing the following identity
\begin{eqnarray*}
\left(
\begin{array}{cc}
CB^{-1}C^{\dagger}+D  & C   \\
C^{\dagger}  & B   \\
\end{array}
\right)=\left(
\begin{array}{cc}
\mathbb{I}  & CB^{-1}   \\
0  & \mathbb{I}   \\
\end{array}
\right)\left(
\begin{array}{cc}
D  & 0   \\
C^{\dagger}  & B   \\
\end{array}
\right)
\end{eqnarray*}
leads to $\det{\rho_{AB}}=\det{(BD)}$

It is not hard to verify that $\tr(\rho_B^2)\geq
\tr(\rho_{AB}^2)-4\sqrt{\det{\rho_{AB}}}$ is equivalent to the condition that $\tr(BD)+2\sqrt{\det BD}\geq \tr (CC^{\dagger})-\tr (CB^{-1}C^{\dagger}B)$.

Observe that $\tr(BD)+2\sqrt{\det BD}=(\tr\sqrt{B^{\frac{1}{2}}DB^{\frac{1}{2}}})^2$, we can further let $D=B^{-\frac{1}{2}}X^2B^{-\frac{1}{2}}$ where $X$ is a positive semidefinite matrix.

Then
\begin{eqnarray}
\rho_{AB}=\left(
\begin{array}{cc}
CB^{-1}C^{\dagger}+B^{-\frac{1}{2}}X^{2}B^{-\frac{1}{2}}  & C   \\
C^{\dagger}  & B   \\
\end{array}
\right)
\end{eqnarray}
where $B$ and $X$ are $2$ by $2$  positive semidefinite matrices and $C$ is a $2$ by $2$ matrix and they satisfy
$(\tr X)^2\geq \tr (CC^{\dagger})-\tr (CB^{-1}C^{\dagger}B)$.

Let us write $C=B^{-\frac{1}{2}}YB^{\frac{1}{2}}$, we have
\begin{eqnarray*}
\rho_{AB}&=&\left(
\begin{array}{cc}
B^{-\frac{1}{2}}(YY^{\dagger}+X^{2})B^{-\frac{1}{2}}  & B^{-\frac{1}{2}}YB^{\frac{1}{2}}   \\
B^{\frac{1}{2}}Y^{\dagger}B^{-\frac{1}{2}}  & B   \\
\end{array}
\right)\\
&=&\left(
\begin{array}{cc}
B^{-\frac{1}{2}}   & 0  \\
0   & B^{\frac{1}{2}}
\end{array}
\right)
\left(
\begin{array}{cc}
Y   & X  \\
\mathbb{I}   & 0
\end{array}
\right)
\left(
\begin{array}{cc}
Y^{\dagger}& \mathbb{I}\\
X & 0
\end{array}
\right)
\left(
\begin{array}{cc}
B^{-\frac{1}{2}}   & 0  \\
0   & B^{\frac{1}{2}}
\end{array}
\right)\\
&=&\left(
\begin{array}{ccc}
B^{-\frac{1}{2}}Y  & B^{-\frac{1}{2}}X   \\
B^{\frac{1}{2}}  & 0   \\
\end{array}
\right){\left(
\begin{array}{ccc}
B^{-\frac{1}{2}}Y  & B^{-\frac{1}{2}}X   \\
B^{\frac{1}{2}}  & 0   \\
\end{array}
\right)}^{\dagger}
\end{eqnarray*}
where $X$ and $B$ are $2$ by $2$ positive semidefinite matrices and $Y$ is a $2$ by $2$ matrix and they satisfy
$(\tr{X})^2\geq \tr (B^{-1}YBY^{\dagger})-\tr (YY^{\dagger})$.

Therefore, any $\rho_{AB}\in \mathcal{C}$ can be written as
\begin{eqnarray}
\rho_{AB}=\left(
\begin{array}{cc}
Q  &  R   \\
P  &  0
\end{array}
\right)\left(
\begin{array}{cc}
Q^{\dagger}  &  P   \\
R^{\dagger}  &  0
\end{array}
\right)
\end{eqnarray}
where $Q$ and $R$ are $2$ by $2$ matrices and $Q$ is  $2$ by $2$ positive semidefinite matrix and they satisfy
$\Vert PR\Vert_{tr}^2=(\tr\sqrt{PRR^{\dagger}P})^2\geq \tr (PP(Q^{\dagger}Q-QQ^{\dagger}))$.

Furthermore, we can even choose $R$ to be a positive semidefinite matrix since $R$ only appears in the term $RR^{\dagger}$ of the top-left $2$-by-$2$ submatrix of $\rho_{AB}$.

Now let's look at the case that $B$ is singular. $B$ is thus a rank $1$ positive operator, without loss of generality, let's assume it is a rank $1$ projection $\ket{u}\bra{u}$. Follows from the positivity of $\rho_{AB}$, $C$ can be written as $\ket{u}\bra{v}$. Hence,
\begin{eqnarray*}
\rho_{AB}=\left(
\begin{array}{cc}
D+\ket{v}\bra{v} & \ket{v}\bra{u}   \\
\ket{u}\bra{v}  & \ket{u}\bra{u}   \\
\end{array}
\right)
\end{eqnarray*}
where $\ket{u}$ is a unit vector, but $\ket{v}$ is unnormalized.

We can simply choose $P=\ket{u}\bra{u}$, $Q=\ket{v}\bra{u}$ and $R=\sqrt{D}$ to satisfy our requirement.
\end{proof}

\section{ \label{appendix:convexity} Appendix~II: Proof of Theorem~\ref{thm:hyperplane} }

As we have shown in the main text, to prove the convexity of $\mathcal{C}$, it suffices to prove Theorem~\ref{thm:hyperplane}, i.e., 
for any full rank state $\sigma_{AB}\in \partial\mathcal{C}$ and any state $\rho_{AB}\in \mathcal{C}$,
\begin{eqnarray*}
\tr \big((\sqrt{\det\sigma_{AB}} \sigma_{AB}^{-1}-\sigma_{AB}+\sigma_B)\rho_{AB}\big)\geq 0.
\end{eqnarray*}

To prove theorem~\ref{thm:hyperplane}, our main strategy is as follows: we first restate Theorem~\ref{thm:hyperplane} as the non-negativity of a multivariable function on some specified region and then apply a step by step optimization procedure to the objective function. In each step, we fix several variables and think of objective function as a one-variable function whose minimum point can be easily computed. Thus one variable will be eliminated within each step. By repeating this procedure several times, we could greatly simplify the objetive function as well as the constraints.

\begin{proof}

As we have seen in Appendix.~I, we can parameterize points in $\mathcal{C}$ by using three $2$-by-$2$ matrices.

Thus, we can write
\begin{eqnarray}
\rho_{AB}=\left(
\begin{array}{cc}
Q_1  &  R_1   \\
P_1  &  0
\end{array}
\right)\left(
\begin{array}{cc}
Q_1^{\dagger}  &  P_1   \\
R_1  &  0
\end{array}
\right)
\end{eqnarray}
and
\begin{eqnarray}
\sigma_{AB}=\left(
\begin{array}{cc}
Q_2  &  R_2   \\
P_2  &  0
\end{array}
\right)\left(
\begin{array}{cc}
Q_2^{\dagger}  &  P_2   \\
R_2  &  0
\end{array}
\right)
\end{eqnarray}
where $P_1,Q_1,R_1,P_2,Q_2,R_2\in \mathbb{M}_2$ satisfies $\Vert P_1R_1\Vert_{\mathrm{tr}}^2 \geq \Vert P_1Q_1^{\dagger}\Vert_{\mathrm{tr}}^2-\Vert P_1Q_1\Vert_{\mathrm{tr}}^2$, $\Vert P_2R_2\Vert_{\mathrm{tr}}^2= \Vert P_2Q_2^{\dagger}\Vert_{\mathrm{tr}}^2-\Vert P_2Q_2\Vert_{\mathrm{tr}}^2$ and $P_1,R_1,P_2,R_2\geq 0$.

Under our assumption, $\sigma_{AB}$ has full rank, thus
\begin{eqnarray*}
\sigma_{AB}^{-1}=\left(
\begin{array}{cc}
0  &  R_2^{-1}   \\
P_2^{-1}  &  -P_2^{-1}Q_2^{\dagger}R_2^{-1}
\end{array}
\right)\left(
\begin{array}{cc}
0  &  P_2^{-1}   \\
R_2^{-1}  &  -R_2^{-1}Q_2P_2^{-1}
\end{array}
\right).
\end{eqnarray*}

Hence, $\tr\Big((\sqrt{\det\sigma_{AB}} \sigma_{AB}^{-1}- \sigma_{AB}+\sigma_B)\rho_{AB}\Big)$ can be written as
\begin{eqnarray}
&&\tr(A\cdot(Q_1Q_1^{\dagger}+R_1^2))-\tr(B\cdot Q_1P_1)-\tr(P_1Q_1^{\dagger}\cdot B^{\dagger})\nonumber\\
&&+\tr(C\cdot P_1^2)\label{eq:block}
\end{eqnarray}
where
\begin{eqnarray}
A&=&\left(\begin{array}{cc}a_{11} & a_{12} \\a_{21} & a_{22}\end{array}\right)=\det(P_2R_2)R_2^{-2}+P_2^2;\nonumber\\
B&=&\left(\begin{array}{cc}b_{11} & b_{12} \\b_{21} & b_{22}\end{array}\right)=\det(P_2R_2)P_2^{-1}Q_2^{\dagger}R_2^{-2}+P_2Q_2^{\dagger};\nonumber\\
C&=&\left(\begin{array}{cc}c_{11} & c_{12} \\c_{21} & c_{22}\end{array}\right)=Q_2Q_2^{\dagger}+R_2^2\nonumber\\
&&+\det(P_2R_2)(P_2^{-2}+P_2^{-1}Q_2^{\dagger}R_2^{-2}Q_2P_2^{-1}).\label{eq:ABC}
\end{eqnarray}

We will denote our objective function Eq.~\eqref{eq:block} as $\tau(P_1,Q_1,R_1,P_2,Q_2,R_2)$. We will prove $\tau(P_1,Q_1,R_1,P_2,Q_2,R_2)\geq 0$ under the assumption that $\Vert P_1R_1\Vert_{\mathrm{tr}}^2 \geq \Vert P_1Q_1^{\dagger}\Vert_{\mathrm{tr}}^2-\Vert P_1Q_1\Vert_{\mathrm{tr}}^2$, $\Vert P_2R_2\Vert_{\mathrm{tr}}^2= \Vert P_2Q_2^{\dagger}\Vert_{\mathrm{tr}}^2-\Vert P_2Q_2\Vert_{\mathrm{tr}}^2$ and $P_1,R_1,P_2,R_2\geq 0$.

To prove the desired conditional inequality, let us first fix $P_1,Q_1,P_2,Q_2,R_2$ and minimize $\tau(P_1,Q_1,R_1,P_2,Q_2,R_2)$ subject to $\Vert P_1R_1\Vert_{\mathrm{tr}}^2 \geq \Vert P_1Q_1^{\dagger}\Vert_{\mathrm{tr}}^2-\Vert P_1Q_1\Vert_{\mathrm{tr}}^2$. In this step, we only need to consider the terms involving $R_1$, i.e., we will minimize $\tr(A \cdot R_1^2)$ subject to $\Vert P_1R_1\Vert_{\mathrm{tr}}^2 \geq \Vert P_1Q_1^{\dagger}\Vert_{\mathrm{tr}}^2-\Vert P_1Q_1\Vert_{\mathrm{tr}}^2$.

If $\Vert P_1Q_1^{\dagger}\Vert_{\mathrm{tr}}\leq \Vert P_1Q_1\Vert_{\mathrm{tr}}$, there is no constraint on $R_1$. Trivially, we have $\tr(A \cdot R_1^2)\geq 0$.

Now let us investigate the non-trivial situation that $\Vert P_1Q_1^{\dagger}\Vert_{\mathrm{tr}}>\Vert P_1Q_1\Vert_{\mathrm{tr}}$.

Let $\mathbb{U}_2$ denote the set of $2$-by-$2$ unitary matrices. According to the Cauchy-Schwarz inequality, we have
\begin{eqnarray*}
&&\tr(A \cdot R_1^2) \cdot \tr(A^{-1}P_1^2)\\
&=&\max\limits_{U,V\in \mathbb{U}_2} (\tr(U^{\dagger}R_1 A R_1U)\cdot \tr(V^{\dagger}P_1A^{-1}P_1V))\\
&\geq& \max\limits_{U,V\in \mathbb{U}_2}\vert\tr(V^{\dagger}P_1R_1 U)\vert^2\\
&=& \Vert P_1R_1\Vert_{\mathrm{tr}}^2\geq \tr(P_1^2(Q_1^{\dagger}Q_1-Q_1Q_1^{\dagger})).
\end{eqnarray*}

This implies
\begin{eqnarray*}
&&\tr(A R_1^2) \geq\frac{\tr(P_1^2(Q_1^{\dagger}Q_1-Q_1Q_1^{\dagger}))}{ \tr(A^{-1}P_1^2)}\\
\end{eqnarray*}
and the equality holds only if there exist $U,V \in \mathbb{U}_2$ such that $A^{\frac{1}{2}}R_1U$ and $A^{-\frac{1}{2}}P_1V$ are linearly dependent, $V^{\dagger}P_1R_1 U$ is diagonal and $\Vert P_1R_1\Vert_{\mathrm{tr}}^2 =\Vert P_1Q_1^{\dagger}\Vert_{\mathrm{tr}}^2-\Vert P_1Q_1\Vert_{\mathrm{tr}}^2$.

Thus by combining the two situations together, we have
\begin{eqnarray}
\tr(A R_1^2) &\geq& \max\Big\{0,\frac{\tr(P_1^2(Q_1^{\dagger}Q_1-Q_1Q_1^{\dagger}))}{ \tr(A^{-1}P_1^2)}\Big\}\\
&\geq& \frac{\tr(P_1^2(Q_1^{\dagger}Q_1-Q_1Q_1^{\dagger}))}{ \tr(A^{-1}P_1^2)}.\label{eq:opt1}
\end{eqnarray}

As a consequence, it suffices to prove
\begin{eqnarray}
&&\tr\Big((Q_1^{\dagger}A^{\frac{1}{2}}-P_1BA^{-\frac{1}{2}})(A^{\frac{1}{2}}Q_1-A^{-\frac{1}{2}}B^{\dagger}P_1)\Big)\nonumber\\
&+&\tr((C-BA^{-1}B^{\dagger})P_1^2)+\frac{\tr(P_1^2(Q_1^{\dagger}Q_1-Q_1Q_1^{\dagger}))}{ \tr(A^{-1}P_1^2)}\geq 0 \nonumber\\
\label{eq:ineq3}
\end{eqnarray}
for any $P_1,Q_1\in \mathbb{M}_2$ and $P_1\geq 0$.

Without loss of generality, we can always assume $P_1$ is diagonal. Let $P_1=\left(\begin{array}{cc}x & 0\\
0 & y\end{array}\right)$ and $Q_1=\left(\begin{array}{cc}q_{11}&q_{12}\\q_{21}& q_{22}\end{array}\right)$. Note that, $q_{11}$ and $q_{22}$ only appear in the first term, i.e., $\tr\Big((Q_1^{\dagger}A^{\frac{1}{2}}-P_1BA^{-\frac{1}{2}})(A^{\frac{1}{2}}Q_1-A^{-\frac{1}{2}}B^{\dagger}P_1)\Big)$. We thus choose suitable $q_{11}$ and $q_{22}$ to minimize $\tr\Big((Q_1^{\dagger}A^{\frac{1}{2}}-P_1BA^{-\frac{1}{2}})(A^{\frac{1}{2}}Q_1-A^{-\frac{1}{2}}B^{\dagger}P_1)\Big)$.

Here we divide $Q_1$ into the diagonal part $\widehat{Q}_1=\left(\begin{array}{cc} q_{11}&0\\0 & q_{22}
\end{array}\right)$ and antidiagonal part $\widetilde{Q}_1=\left(\begin{array}{cc}
0 & q_{12} \\q_{21} & 0\end{array}\right)$, then
\begin{eqnarray*}
&&\Vert A^{\frac{1}{2}}Q_1-A^{-\frac{1}{2}}B^{\dagger}P_1\Vert_F\\
&=&\Big\Vert q_{11}A^{\frac{1}{2}}\ket{0}\bra{0}+q_{22}A^{\frac{1}{2}}\ket{1}\bra{1}+A^{\frac{1}{2}}\widetilde{Q}_1-A^{-\frac{1}{2}}B^{\dagger}P_1\Big\Vert_F
\end{eqnarray*}
which can be considered as the distance from a point $\big(-A^{\frac{1}{2}}\widetilde{Q}_1+A^{-\frac{1}{2}}B^{\dagger}P_1\big)$ to another point on the plane spanned by $A^{\frac{1}{2}}\ket{0}\bra{0}$ and $A^{\frac{1}{2}}\ket{1}\bra{1}$.

Certainly, the minimum can be achieved if and only if $q_{11}A^{\frac{1}{2}}\ket{0}\bra{0}+q_{22}A^{\frac{1}{2}}\ket{1}\bra{1}$ is the projection of $\big(-A^{\frac{1}{2}}\widetilde{Q}_1+A^{-\frac{1}{2}}B^{\dagger}P_1\big)$ onto the plane, i.e., $q_{11}A^{\frac{1}{2}}\ket{0}\bra{0}+q_{22}A^{\frac{1}{2}}\ket{1}\bra{1}+A^{\frac{1}{2}}\widetilde{Q}_1-A^{-\frac{1}{2}}B^{\dagger}P_1 \perp \mathop{\mathrm{span}}\{A^{\frac{1}{2}}\ket{0}\bra{0}, A^{\frac{1}{2}}\ket{1}\bra{1}\}$.

Thus by solving the linear system derived by the orthogonal conditions, we have

\begin{widetext}
\begin{eqnarray}
&&\min\limits_{q_{11},q_{22}}\Vert q_{11}A^{\frac{1}{2}}\ket{0}\bra{0}+q_{22}A^{\frac{1}{2}}\ket{1}\bra{1}+A^{\frac{1}{2}}\widetilde{Q}_1-A^{-\frac{1}{2}}B^{\dagger}P_1\Vert_F^2\nonumber\\
&=&\Big\Vert A^{\frac{1}{2}}\widetilde{Q}_1-A^{-\frac{1}{2}}B^{\dagger}P_1\Big\Vert_F^2-\Big\Vert (-\frac{\bra{0}A\widetilde{Q}_1-B^{\dagger}P_1\ket{0}}{\bra{0}A\ket{0}})A^{\frac{1}{2}}\ket{0}\bra{0}
+(-\frac{\bra{1}A\widetilde{Q}_1-B^{\dagger}P_1\ket{1}}{\bra{1}A\ket{1}})A^{\frac{1}{2}}\ket{1}\bra{1}\Big\Vert_F^2.\label{eq:opt2}
\end{eqnarray}
\end{widetext}

By substituting corresponding terms in the left-hand side of Eq.~\eqref{eq:ineq3}, we have
\begin{widetext}
\begin{eqnarray}
&&\tau(P_1,Q_1,R_1,P_2,Q_2,R_2)\nonumber\\
&\geq&\tr\Big((Q_1^{\dagger}A^{\frac{1}{2}}-P_1BA^{-\frac{1}{2}})(A^{\frac{1}{2}}Q_1-A^{-\frac{1}{2}}B^{\dagger}P_1)\Big)
+\tr((C-BA^{-1}B^{\dagger})P_1^2)+\frac{\tr(P_1^2(Q_1^{\dagger}Q_1-Q_1Q_1^{\dagger}))}{ \tr(A^{-1}P_1^2)}\nonumber\\
&\geq &\Big\Vert A^{\frac{1}{2}}\widetilde{Q}_1-A^{-\frac{1}{2}}B^{\dagger}P_1\Big\Vert_F^2-\Big\Vert (-\frac{\bra{0}A\widetilde{Q}_1-B^{\dagger}P_1\ket{0}}{\bra{0}A\ket{0}})A^{\frac{1}{2}}\ket{0}\bra{0}
+(-\frac{\bra{1}A\widetilde{Q}_1-B^{\dagger}P_1\ket{1}}{\bra{1}A\ket{1}})A^{\frac{1}{2}}\ket{1}\bra{1}\Big\Vert_F^2\nonumber\\
&&+\tr((C-BA^{-1}B^{\dagger})P_1^2)+\frac{\tr(P_1^2(Q_1^{\dagger}Q_1-Q_1Q_1^{\dagger}))}{ \tr(A^{-1}P_1^2)}\nonumber\\
&=& \tr(A\widetilde{Q}_1\widetilde{Q}_1^{\dagger})-\tr(B^{\dagger}P_1\widetilde{Q}_1^{\dagger})-\tr(B\widetilde{Q}_1P_1)
-\frac{\vert\bra{0}A\widetilde{Q}_1-B^{\dagger}P_1\ket{0}\vert^2}{\bra{0}A\ket{0}}-\frac{\vert\bra{1}A\widetilde{Q}_1-B^{\dagger}P_1\ket{1}\vert^2}{\bra{1}A\ket{1}}\nonumber\\
&&+\tr(CP_1^2)+\frac{\tr(P_1^2(Q_1^{\dagger}Q_1-Q_1Q_1^{\dagger}))}{ \tr(A^{-1}P_1^2)}\nonumber
\end{eqnarray}
\begin{eqnarray}
&=& c_{11}x^2+c_{22}y^2+a_{11} \vert q_{12}\vert^2+a_{22}\vert q_{21}\vert^2-b_{21}q_{12}y -b_{12}q_{21}x\nonumber\\
&&-b_{21}^{\ast}q_{12}^{\ast}y-b_{12}^{\ast}q_{21}^{\ast}x+\frac{\det(A)(x^2-y^2)(\vert q_{21}\vert^2-\vert q_{12}\vert^2)}{a_{11}y^2 +a_{22} x^2 }
-\frac{\vert q_{21}a_{12}-x b_{11}^{\ast}\vert^2}{a_{11}}-\frac{\vert q_{12}a_{21}-y b_{22}^{\ast}\vert^2}{a_{22}}\nonumber\\
&=&\frac{\det(A)(a_{11}+a_{22})}{a_{11}y^2+a_{22}x^2}\nonumber\\
&&\cdot\Bigg(\frac{1}{a_{11}}\Big\vert x q_{21} +\frac{(a_{11}y^2+a_{22}x^2) (a_{12}b_{11}-b_{12} a_{11})^{\ast}}{\det(A)(a_{11}+a_{22})}\Big\vert^2
+\frac{1}{a_{22}}\Big\vert y q_{12} +\frac{(a_{11}y^2+a_{22}x^2) (a_{21}b_{22}-a_{22}b_{21})^{\ast}}{\det(A)(a_{11}+a_{22})}\Big\vert^2\Bigg)\nonumber\\
&&+(c_{11}-\frac{\vert b_{11}\vert^2}{a_{11}}-\frac{\frac{a_{22}}{a_{11}} \vert a_{12}b_{11}-a_{11}b_{12}\vert^2+\vert a_{21}b_{22}-a_{22}b_{21}\vert^2}{\det(A)(a_{11}+a_{22})})
 x^2\nonumber\\
&&+(c_{22}-\frac{\vert b_{22}\vert^2}{a_{22}}-\frac{\vert a_{12}b_{11}-a_{11}b_{12}\vert^2+\frac{a_{11}}{a_{22}} \vert a_{21}b_{22}-a_{22}b_{21}\vert^2}{\det(A)(a_{11}+a_{22})})
 y^2. \label{eq:opt3}
\end{eqnarray}
\end{widetext}

To complete our proof, we will show the last two terms all vanish when the full rank state satisfies $\sigma_{AB}\in \partial{\mathcal{C}}$, which will immediately lead to our desired conditional inequality.

Note that $\left(\begin{array}{cc}A & -B^{\dagger} \\ -B & C\end{array}\right)$ represents the matrix form of $H_{AB}=\sqrt{\det{\sigma_{AB}}}\sigma_{AB}^{-1}-\sigma_{AB}+\sigma_B$. Thus the last two terms vanish if and only if
\begin{eqnarray*}
&&\det \bra{0_B} H_{AB} \ket{0_B}=\det \bra{1_B} H_{AB} \ket{1_B}\\
&=& \frac{a_{22}\Big\vert \det{\bra{0_B}H_{AB}\ket{0_A}}\Big\vert^2+a_{11}\Big\vert \det{\bra{1_B}H_{AB}\ket{0_A}}\Big\vert^2}{\det{\bra{0_A}H_{AB}\ket{0_A}}(a_{11}+a_{22})}.
\end{eqnarray*}

Let $H_{AB}^{(i_1,\cdots,i_k)}$ be the submatrix formed by taking the $(i_1,\cdots,i_k)$-th rows and columns of $H_{AB}$. Then $\det {\bra{0_B} H_{AB} \ket{0_B}}=\det {\bra{1_B} H_{AB} \ket{1_B}}$ means $\det{H_{AB}^{(1,3)}}=\det{H_{AB}^{(2,4)}}$. Once we have proved the first equality, the second equality can be rewritten as
\begin{eqnarray*}
&&a_{22}\det{\bra{0_A}H_{AB}\ket{0_A}}\det {\bra{0_B} H_{AB} \ket{0_B}}\\
&&+a_{11}\det{\bra{0_A}H_{AB}\ket{0_A}}\det {\bra{1_B} H_{AB} \ket{1_B}}\\
&=&a_{22}\Big\vert \det{\bra{0_B}H_{AB}\ket{0_A}}\Big\vert^2+a_{11}\Big\vert \det{\bra{1_B}H_{AB}\ket{0_A}}\Big\vert^2
\end{eqnarray*}
which can be further reformulated as $\det{H_{AB}^{(1,2,3)}}=-\det{H_{AB}^{(1,2,4)}}$.

Thus, to accomplish our goal, it suffices to prove
\begin{eqnarray}
\det{H_{AB}^{(1,3)}}&=&\det{H_{AB}^{(2,4)}};\label{eq:term1}\\
\det{H_{AB}^{(1,2,3)}}&=&-\det{H_{AB}^{(1,2,4)}}.\label{eq:term2}
\end{eqnarray}

For Eq.~\eqref{eq:term1}, i.e.
\begin{eqnarray*}
\bra{0}A\ket{0} \bra{0}C\ket{0}-\vert \bra{0} B\ket{0}\vert^2=\bra{1}A\ket{1} \bra{1}C\ket{1}-\vert \bra{1}B\ket{1}\vert^2,
\end{eqnarray*}
it is equivalent to
\begin{eqnarray*}
\bra{0} AC\ket{0}+\bra{1} CA\ket{1}=\bra{1} B^{\dagger}B\ket{1}+\bra{0} BB^{\dagger}\ket{0}.
\end{eqnarray*}

To prove this, it suffices to show $AC-BB^{\dagger}$ is the adjugate matrix of $CA-B^{\dagger}B$, i.e. $AC-BB^{\dagger}+CA-B^{\dagger}B=\tr(AC-BB^{\dagger})\mathbb{I}$.

In fact, to prove the above claim, our assumption $\Vert P_2R_2\Vert_{\mathrm{tr}}^2= \Vert P_2Q_2^{\dagger}\Vert_{\mathrm{tr}}^2-\Vert P_2Q_2\Vert_{\mathrm{tr}}^2$ is not necessary. The identity holds for any $2$-by-$2$ Hermitian matrices $P_2$, $R_2$ and any $2$-by-$2$ matrix $Q_2$. This fact can be easily verified by using symbolic computing softwares like Mathematica\footnote{The Mathematica notebook can be found at \url{http://jianxin.iqubit.org/downloads/Verification.nb}.}.

Now let us look at Eq.~\eqref{eq:term2}. Let
\begin{eqnarray*}
\widetilde{H}&=&\left(\begin{array}{cc} A & 0 \\0 & C-BA^{-1}B^{\dagger}\end{array}\right)\\
&=&\left(\begin{array}{cc}
\mathbb{I} & 0\\
BA^{-1} & \mathbb{I}
\end{array}\right) H \left(\begin{array}{cc}
\mathbb{I} & A^{-1}B^{\dagger}\\
0 & \mathbb{I}
\end{array}\right).
\end{eqnarray*}

The determinant is invariant under elementary row and column operations, we have
$\det H_{AB}^{(1,2,3)}=\det \widetilde{H}^{(1,2,3)}=\det{(A)} \widetilde{H}_{3,3}$ and $\det H_{AB}^{(1,2,4)}=\det \widetilde{H}^{(1,2,4)}=\det{(A)} \widetilde{H}_{4,4}$. Therefore, Eq.~\eqref{eq:term2} is equivalent to $\widetilde{H}_{3,3}=-\widetilde{H}_{4,4}$, i.e.
\begin{eqnarray*}\label{eq:term2tr}
\tr(C-BA^{-1}B^{\dagger})=0.
\end{eqnarray*}

$\tr(C-BA^{-1}B^{\dagger})$ is invariant under local unitary operations, thus it suffices to prove $\tr(C-BA^{-1}B^{\dagger})=0$ for diagonal $P_2$.

Again, let $P_2=\left(\begin{array}{cc}x^{\prime} & 0\\ 0 & y^{\prime}\end{array}\right)$ and divide $Q_2=\left(\begin{array}{cc}q_{11}^{\prime}&q_{12}^{\prime}\\q_{21}^{\prime}& q_{22}^{\prime}\end{array}\right)$ into the diagonal part $\widehat{Q}_2$ and antidiagonal part $\widetilde{Q}_2$. Simple calculation will show that $\widehat{Q}_2$ all cancel out in $\tr(C-BA^{-1}B^{\dagger})$ so we can assume $q_{11}^{\prime}=q_{22}^{\prime}=0$ without loss of generality.

Then everything is straightforward.

By substituting $P_2=\left(\begin{array}{cc}x^{\prime} & 0\\ 0 & y^{\prime}\end{array}\right)$, $Q_2=\left(\begin{array}{cc}0&q_{12}^{\prime}\\q_{21}^{\prime}&0\end{array}\right)$ and $R_2=\left(\begin{array}{cc}r_{11} & r_{12} \\ r_{12}^{\ast} & r_{22}\end{array}\right)$ in Eq.~\eqref{eq:ABC}, we will have
\begin{eqnarray*}
&&\tr(C-BA^{-1}B^{\dagger})\\
&=&\frac{(r_{11}x^{\prime}+r_{22}y^{\prime})(r_{11}y^{\prime}+r_{22}x^{\prime})-\vert r_{12}\vert^2(x^{\prime}-y^{\prime})^2}{x^{\prime}y^{\prime}\Big((r_{11}x^{\prime}+r_{22} y^{\prime})^2+\vert r_{12}\vert^2 (x^{\prime}-y^{\prime})^2\Big)}\\
&&\cdot \Big( (r_{11}x^{\prime}+r_{22}y^{\prime})^2+\vert r_{12}\vert^2(x^{\prime}-y^{\prime})^2\\
&&-((x^{\prime})^2-(y^{\prime})^2)(\vert q_{21}^{\prime}\vert^2-\vert q_{12}^{\prime}\vert^2)\Big).
\end{eqnarray*}

Under our assumption, a full rank state $\sigma_{AB}\in \partial{\mathcal{C}}$ implies $\Vert P_2R_2\Vert_{\mathrm{tr}}^2= \Vert P_2Q_2^{\dagger}\Vert_{\mathrm{tr}}^2-\Vert P_2Q_2\Vert_{\mathrm{tr}}^2$, or equivalently $(r_{11}x^{\prime}+r_{22}y^{\prime})^2+\vert r_{12}\vert^2(x^{\prime}-y^{\prime})^2=((x^{\prime})^2-(y^{\prime})^2)(\vert q_{21}^{\prime}\vert^2-\vert q_{12}^{\prime}\vert^2)$. $\tr(C-BA^{-1}B^{\dagger})=0$ follows immediately.
\end{proof}

\section{Appendix~III: Faces of $\mathcal{C}$}
Follows from Theorem~\ref{thm:hyperplane}, $\mathcal{C}$ is a convex body. Faces of $\mathcal{C}$ are its intersections with the supporting hyperplanes.

Let us start with a full rank boundary point $\sigma_{AB} \in \partial\mathcal{C}$. Let $H_{AB}(\sigma_{AB})=\sqrt{\det{\sigma_{AB}}}\sigma_{AB}^{-1}-\sigma_{AB}+\sigma_B$,  then the supporting hyperplane
\begin{eqnarray*}
\mathcal{L}(\sigma_{AB}):=\{X: \tr (H_{AB}(\sigma_{AB})\cdot X)=0\}
\end{eqnarray*}
also defines a face $\mathcal{F}(\sigma_{AB})=\mathcal{L}(\sigma_{AB})\bigcap \mathcal{C}$.

Recall that in Appendix.~II, we applied a step-by-step optimization procedure to prove $\tr(H_{AB}(\sigma_{AB}) \cdot \rho_{AB})\geq 0$ for any $\rho_{AB}\in \mathcal{C}$. Thus, $\mathcal{L}(\sigma_{AB})\bigcap \mathcal{C}$ contains all those states satisfying equality in every optimization step. In this Appendix, we will solve the equation system and then
provide a complete parameterization of $\mathcal{F}(\sigma_{AB})$. As a byproduct, we will prove Theorem~\ref{thm:suf} at the end of this Appendix.

According to Appendix.~I, $\sigma_{AB}$ can be represented as the following by using three $2$-by-$2$ matrices $P_2, Q_2, R_2$ satisfying $\Vert P_2R_2\Vert_{\mathrm{tr}}^2= \Vert P_2Q_2^{\dagger}\Vert_{\mathrm{tr}}^2-\Vert P_2Q_2\Vert_{\mathrm{tr}}^2$ and $P_2, R_2\geq 0$:
\begin{eqnarray*}
\sigma_{AB}=\left(
\begin{array}{cc}
Q_2  &  R_2   \\
P_2  &  0
\end{array}
\right)\left(
\begin{array}{cc}
Q_2^{\dagger}  &  P_2   \\
R_2  &  0
\end{array}
\right)\in \partial{\mathcal{C}}.
\end{eqnarray*}

We can represent any state $\rho_{AB}\in \mathcal{F}(\sigma_{AB})$ in the same way:
\begin{eqnarray*}
\rho_{AB}=\left(
\begin{array}{cc}
Q_1  &  R_1   \\
P_1  &  0
\end{array}
\right)\left(
\begin{array}{cc}
Q_1^{\dagger}  &  P_1\\
R_1  &  0
\end{array}
\right).
\end{eqnarray*}

Thus, our aim is to characterize the set of $3$-tuples $\{(P_1,Q_1,R_1): \left(
\begin{array}{cc}
Q_1  &  R_1   \\
P_1  &  0
\end{array}
\right)\left(
\begin{array}{cc}
Q_1^{\dagger}  &  P_1   \\
R_1  &  0
\end{array}
\right) \in \mathcal{F}(\sigma_{AB})\}$ for any given $\sigma_{AB}=\left(
\begin{array}{cc}
Q_2  &  R_2   \\
P_2  &  0
\end{array}
\right)\left(
\begin{array}{cc}
Q_2^{\dagger}  &  P_2   \\
R_2  &  0
\end{array}
\right)\in \partial{\mathcal{C}}$, or equivalently, those $3$-tuples $(P_1,Q_1,R_1)$ to make $\tau(P_1,Q_1,R_1,P_2,Q_2,R_2)$ which is defined in Eq.~\eqref{eq:block} vanish.


We first consider those $3$-tuples $(P_1,Q_1,R_1)$ in which $P_1$ is a diagonal matrix $\left(\begin{array}{cc}x & 0\\0 & y\end{array}\right)$. It is also what we assumed in our proof in Appendix.~II. $A=(a_{ij})_{1\leq i,j\leq 2}$ and $B=(b_{ij})_{1\leq i,j\leq 2}$ are matrices only depending on $P_2,Q_2,R_2$, as given in Eq~\eqref{eq:ABC}. As we provide a step-by-step optimization procedure to show $\tau(P_1,Q_1,R_1,P_2,Q_2,R_2)\geq 0$ in Appendix~II, $Q_1$ and $R_1$ must be chosen to make the equalities hold in every optimization step. 
\begin{enumerate}
\item [1.] The equality in Eq~\eqref{eq:opt1} holds if and only if there exist $U,V \in \mathbb{U}_2$ such that $A^{\frac{1}{2}}R_1U$ and $A^{-\frac{1}{2}}P_1V$ are linearly dependent, $V^{\dagger}P_1R_1 U$ is diagonal and $\Vert P_1R_1\Vert_{\mathrm{tr}}^2 =\Vert P_1Q_1^{\dagger}\Vert_{\mathrm{tr}}^2-\Vert P_1Q_1\Vert_{\mathrm{tr}}^2$;
\item [2.] The minimum of the left-hand-side in Eq~\eqref{eq:opt2} can be achieved if and only if $q_{11}A^{\frac{1}{2}}\ket{0}\bra{0}+q_{22}A^{\frac{1}{2}}\ket{1}\bra{1}$ is the projection of $\big(-A^{\frac{1}{2}}\widetilde{Q}_1+A^{-\frac{1}{2}}B^{\dagger}P_1\big)$ onto the plane, i.e., $q_{11}A^{\frac{1}{2}}\ket{0}\bra{0}+q_{22}A^{\frac{1}{2}}\ket{1}\bra{1}+A^{\frac{1}{2}}\widetilde{Q}_1-A^{-\frac{1}{2}}B^{\dagger}P_1 \perp \mathop{\mathrm{span}}\{A^{\frac{1}{2}}\ket{0}\bra{0}, A^{\frac{1}{2}}\ket{1}\bra{1}\}$;
\item [3.] The right-hand-side of Eq~\eqref{eq:opt3} equals to zero if and only if $x q_{21} +\frac{(a_{11}y^2+a_{22}x^2) (a_{12}b_{11}-b_{12} a_{11})^{\ast}}{\det(A)(a_{11}+a_{22})}$ and $y q_{12} +\frac{(a_{11}y^2+a_{22}x^2) (a_{21}b_{22}-a_{22}b_{21})^{\ast}}{\det(A)(a_{11}+a_{22})}$ all vanish.
\end{enumerate}

$Q_1$ and $R_1$ can thus be derived by using elementary linear algebra. Explicit expressions will be given later in the more general Lemma~\ref{lemma:pa}.

If $P_1$ is not diagonal, then follows from the eigenvalue decomposition, we can write $P_1=U\left(
\begin{array}{cc}
x  &  0   \\
0  &  y
\end{array}
\right)U^{\dagger}$ where $U$ is a $2$-by-$2$ unitary matrix and $x, y$ are positive numbers. Note that $\rho_{AB}\in \mathcal{F}(\sigma_{AB})$ if and only if $(U^{\dagger}\otimes U^{\dagger})\rho_{AB}(U\otimes U)\in \mathcal{F}\big((U^{\dagger}\otimes U^{\dagger})\sigma_{AB}(U\otimes U)\big)$ and $(U^{\dagger}\otimes U^{\dagger})\rho_{AB}(U\otimes U)$ can be represented by the $3$-tuple $(U^{\dagger}P_1U, U^{\dagger}Q_1U, U^{\dagger}R_1U)$, hence our result for diagonal case will apply directly.

To summarize, given a full rank $\sigma_{AB}=\left(
\begin{array}{cc}
Q_2  &  R_2   \\
P_2  &  0
\end{array}
\right)\left(
\begin{array}{cc}
Q_2^{\dagger}  &  P_2   \\
R_2  &  0
\end{array}
\right)\in \partial{\mathcal{C}}$, we can parameterize all full rank states in $\mathcal{F}(\sigma_{AB})$ by using a $2$-by-$2$ unitary matrix $U$ and positive numbers $x, y$ as the following lemma:
\begin{lemma}\label{lemma:pa}
All full rank states in $\mathcal{F}(\sigma_{AB})$ can be represented as some
\begin{eqnarray*}
\widetilde{\rho}_{AB}(x,y,U)=\left(
\begin{array}{cc}
Q_1  &  R_1   \\
P_1  &  0
\end{array}
\right)\left(
\begin{array}{cc}
Q_1^{\dagger}  &  P_1\\
R_1  &  0
\end{array}
\right)
\end{eqnarray*}
where 
\begin{eqnarray*}
P_1&=&U\left(
\begin{array}{cc}
x  &  0   \\
0  &  y
\end{array}
\right)U^{\dagger};\\
Q_1&=&\frac{1}{\det(A)\tr(A)}U\left(\begin{array}{cc}
q_{11} & q_{12}\\
q_{21} & q_{22}
\end{array}
\right)\cdot \left(
\begin{array}{cc}
x & 0\\
0 & y\\
\end{array}
\right)^{-1}U^{\dagger};\\
\end{eqnarray*}
and
\begin{widetext}
\begin{eqnarray*}
R_1=\frac{\sqrt{(x^2-y^2)(\vert \frac{a_{12}b_{11}-b_{12} a_{11}}{x} \vert^2-\vert  \frac{a_{21}b_{22}-a_{22}b_{21}}{y}\vert^2)}}{\det(A)\tr(A)}
\cdot U\sqrt{\left(
\begin{array}{cc}
a_{22}^2x^2+\vert a_{12}\vert^2 y^2 &  -a_{12}(a_{11}y^2+a_{22}x^2)   \\
-a_{21}(a_{11}y^2+a_{22}x^2)  & \vert a_{21}\vert^2 x^2+a_{11}^2 y^2
\end{array}
\right)}U^{\dagger}
\end{eqnarray*}
\end{widetext}
where
\begin{eqnarray*}
A(U)&=&(a_{ij})_{1\leq i,j\leq 2}=U^{\dagger}(\det(P_2R_2)R_2^{-2}+P_2^2)U;\\
B(U)&=&(b_{ij})_{1\leq i,j\leq 2}=U^{\dagger}(\det(P_2R_2)P_2^{-1}Q_2^{\dagger}R_2^{-2}+P_2Q_2^{\dagger})U
\end{eqnarray*}
and 
\begin{eqnarray*}
q_{11}&=&((a_{11}a_{22}+a_{22}^2-a_{12}a_{21})b_{11}^{\ast}-a_{12}a_{22}b_{12}^{\ast})x^2\\
&&+a_{12}(a_{21}b_{11}^{\ast}-a_{11}b_{12}^{\ast})y^2;\\
q_{12}&=&-(a_{11}y^2+a_{22}x^2) (a_{21}b_{22}-a_{22}b_{21})^{\ast};\\
q_{21}&=& -(a_{11}y^2+a_{22}x^2) (a_{12}b_{11}-b_{12} a_{11})^{\ast};\\
q_{22}&=&a_{21}(a_{12}b_{22}^{\ast}-a_{22}b_{21}^{\ast})x^2\\
&&+((a_{11}a_{22}+a_{11}^2-a_{12}a_{21})b_{22}^{\ast}-a_{21}a_{11}b_{21}^{\ast})y^2.
\end{eqnarray*}
\end{lemma}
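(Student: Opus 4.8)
The plan is to extract the equality conditions from the step-by-step optimization carried out in Appendix~II. Recall that the proof of Theorem~\ref{thm:hyperplane} there bounds the objective function $\tau(P_1,Q_1,R_1,P_2,Q_2,R_2)$ of Eq.~\eqref{eq:block} from below by a chain of inequalities, namely Eq.~\eqref{eq:opt1}, Eq.~\eqref{eq:opt2}, and the manifest nonnegativity of the right-hand side of Eq.~\eqref{eq:opt3}. Writing a state $\rho_{AB}\in\mathcal{C}$ through a triple $(P_1,Q_1,R_1)$ as in Lemma~\ref{lemma:char}, we have $\rho_{AB}\in\mathcal{F}(\sigma_{AB})=\mathcal{L}(\sigma_{AB})\cap\mathcal{C}$ exactly when $\tr\big(H_{AB}(\sigma_{AB})\cdot\rho_{AB}\big)=\tau=0$, which forces equality in each of those three steps simultaneously. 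Hence the proof reduces to solving the corresponding equality system for $(P_1,Q_1,R_1)$.

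First I would reduce to diagonal $P_1$. Since $\mathcal{C}$ is invariant and $H_{AB}(\cdot)$ covariant under conjugation by $U\otimes U$, one has $\rho_{AB}\in\mathcal{F}(\sigma_{AB})$ if and only if $(U^{\dagger}\otimes U^{\dagger})\rho_{AB}(U\otimes U)\in\mathcal{F}\big((U^{\dagger}\otimes U^{\dagger})\sigma_{AB}(U\otimes U)\big)$, which replaces the triple $(P_1,Q_1,R_1)$ by $(U^{\dagger}P_1U,U^{\dagger}Q_1U,U^{\dagger}R_1U)$. Choosing $U$ so that $U^{\dagger}P_1U=\diag(x,y)$, it suffices to solve the equality system for diagonal $P_1$; the dependence on $U$ then re-enters the answer only through the blocks $A(U)$ and $B(U)$ of the conjugated $H_{AB}$, which by Eq.~\eqref{eq:ABC} are precisely $U^{\dagger}\big(\det(P_2R_2)R_2^{-2}+P_2^2\big)U$ and $U^{\dagger}\big(\det(P_2R_2)P_2^{-1}Q_2^{\dagger}R_2^{-2}+P_2Q_2^{\dagger}\big)U$.

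With $P_1=\diag(x,y)$ I would then run through the three equality conditions in order. The vanishing of the right-hand side of Eq.~\eqref{eq:opt3} says that its two squared-modulus terms vanish, which immediately pins down the antidiagonal entries of $Q_1$; absorbing the common factor $\det(A)\tr(A)$ gives the stated $q_{12}$ and $q_{21}$. Next, attainment of the minimum in Eq.~\eqref{eq:opt2} is the orthogonality requirement on the diagonal part $\widehat{Q}_1$ recorded there; feeding in the already-determined antidiagonal part and solving that $2\times2$ linear system yields $q_{11}$ and $q_{22}$, an elementary computation that reproduces the displayed expressions. Finally, the equality condition in Eq.~\eqref{eq:opt1} splits into two pieces: the linear-dependence/diagonality requirement fixes $R_1^2$ up to a positive scalar as proportional to the $2\times2$ matrix under the outer square root in the statement of the lemma, and the trace-norm identity $\Vert P_1R_1\Vert_{\mathrm{tr}}^2=\Vert P_1Q_1^{\dagger}\Vert_{\mathrm{tr}}^2-\Vert P_1Q_1\Vert_{\mathrm{tr}}^2$ then fixes that scalar, producing $R_1\geq0$ as the positive square root. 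Verifying that the quantity under the outer square root is nonnegative over the admissible region uses the hypothesis $\sigma_{AB}\in\partial\mathcal{C}$, i.e.\ $\Vert P_2R_2\Vert_{\mathrm{tr}}^2=\Vert P_2Q_2^{\dagger}\Vert_{\mathrm{tr}}^2-\Vert P_2Q_2\Vert_{\mathrm{tr}}^2$, exactly as this equality was used at the end of Appendix~II to annihilate the last two terms of Eq.~\eqref{eq:opt3}.

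The hard part will not be conceptual: each optimization step pins down one block of $(P_1,Q_1,R_1)$, so nothing is genuinely left to choose. The obstacle is bookkeeping --- inverting the orthogonality system for $q_{11},q_{22}$, disentangling $R_1$ from the trace-norm constraint, and keeping the $\det(A)\tr(A)$ normalizations and the $U$-conjugations consistent throughout while checking the positivity and well-definedness of the matrices that appear. As with the identities in Appendix~II, the bulk of this algebra can be discharged symbolically.
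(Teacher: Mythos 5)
Your proposal matches the paper's own argument: the paper likewise characterizes $\mathcal{F}(\sigma_{AB})$ as the set of triples $(P_1,Q_1,R_1)$ attaining equality in each of the three optimization steps of Appendix~II, reduces to diagonal $P_1$ via the $U\otimes U$ covariance of $\mathcal{F}$, and then reads off $q_{12},q_{21}$ from the vanishing of the squared-modulus terms in Eq.~\eqref{eq:opt3}, $q_{11},q_{22}$ from the orthogonality condition of Eq.~\eqref{eq:opt2}, and $R_1$ from the linear-dependence and trace-norm equality conditions of Eq.~\eqref{eq:opt1}. The paper is equally terse about the remaining linear algebra, so your outline is a faithful reconstruction of its proof.
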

We reuse the symbols `$a_{ij}$' and `$b_{ij}$' to keep our formulas simple, but one should keep in mind that they depend on unitary matrix $U$. Indeed, we should use the more precise form $a_{ij}(U)$ and $b_{ij}(U)$ instead in Lemma~\ref{lemma:pa} if we do not care about the length of the expressions.

To make sure that $\widetilde{\rho}_{AB}(x,y,U)$ lies in $\mathcal{C}$,  $x$ and $y$ must satisfy
\begin{eqnarray*}
(x-y)(\vert a_{21}b_{22}-a_{22}b_{21}\vert x-\vert a_{12}b_{11}-b_{12} a_{11}\vert y) \leq 0.
\end{eqnarray*}

All full rank states in $\mathcal{F}(\sigma_{AB})$ can be parameterized in this way. However, for the case $x=y$ or $\frac{x}{y}=\vert\frac{ a_{12}b_{11}-b_{12} a_{11}}{ a_{21}b_{22}-a_{22}b_{21}}\vert$, $\widetilde{\rho}_{AB}(x,y,U)$ has rank $2$ since the corresponding $R_1$ is a zero matrix for both cases.

$\mathcal{F}(\sigma_{AB})$ also contains other non-full-rank states which corresponds to $x=0$ or $y=0$.

$y=0$ occurs only if $\vert a_{21}b_{22}-a_{22}b_{21}\vert=0$. In this case, we have
\begin{widetext}
\begin{eqnarray*}
\widetilde{\rho}_{AB}(x,0,U)
=(U\otimes U)\left(
\begin{array}{cccc}
\frac{\vert b_{11}\vert^2 x^2}{a_{12}a_{21}}& -\frac{\vert b_{11}\vert^2 x^2}{a_{22}a_{21}} &  0 &  0 \\
-\frac{\vert b_{11}\vert^2 x^2}{a_{12}a_{22}}&  \frac{\vert b_{11}\vert^2 x^2}{a_{12}a_{21}}+\frac{\vert b_{11}\vert^2 x^2}{a_{22}^2}& \frac{b_{11}^{\ast}x^2}{a_{12}}  &  0 \\
0 & \frac{b_{11}x^2}{a_{21}} &  x^2 &  0\\
0 & 0 & 0  &0
\end{array}
\right)(U^{\dagger}\otimes U^{\dagger})
\end{eqnarray*}
\end{widetext}
which is a rank $2$ state.

We have similar results for the case $x=0$.

At the end of this Appendix, we will prove Theorem~\ref{thm:suf}  as an application of our parameterization scheme. Simple calculation will show us that all entries of $\widetilde{\rho}_{AB}(x,y,U)$ are linear combinations of $x^2$ and $y^2$. Let us assume $\vert \frac{ a_{12}b_{11}-b_{12} a_{11}}{ a_{21}b_{22}-a_{22}b_{21}}\vert>1$ without loss of generality, then for any $y\leq x\leq \vert\frac{ a_{12}b_{11}-b_{12} a_{11}}{ a_{21}b_{22}-a_{22}b_{21}}\vert y$, $\rho_{AB}(x,y,U)$ is a convex combination of $\widetilde{\rho}_{AB}(y,y,U)$ and $\widetilde{\rho}_{AB}(\vert\frac{ a_{12}b_{11}-b_{12} a_{11}}{ a_{21}b_{22}-a_{22}b_{21}}\vert y,y,U)$, both of which are rank $2$ states.

In other words, after the normalization, $\widetilde{\rho}_{AB}(x,y,U)$ only depends on unitary matrix $U$ and the ratio of $x$ and $y$. Let $\rho_{AB}(1,0,U)=\widetilde{\rho}_{AB}(1,1,U)$ and $\rho_{AB}(0,1,U)=\widetilde{\rho}_{AB}(\vert a_{12}b_{11}-b_{12} a_{11}\vert,\vert a_{21}b_{22}-a_{22}b_{21}\vert,U)$, then all states on the face $\mathcal{F}(\sigma_{AB})$ can be represented as $\rho_{AB}(\lambda,1-\lambda,U)=\lambda\rho_{AB}(1,0,U)+(1-\lambda)\rho_{AB}(0,1,U)$ where $0\leq \lambda\leq 1$ and $U\in \mathbb{U}_2$.

\end{document}